\newtheorem{Theorem}{Theorem}[section]
\newtheorem{lem}[Theorem]{Lemma}
\newtheorem{Remark}[Theorem]{Remark}
\newtheorem{Definition}[Theorem]{Definition}
\newtheorem{Corollary}[Theorem]{Corollary}
\newtheorem{Proposition}[Theorem]{Proposition}
\newtheorem{Example}[Theorem]{Example}
\numberwithin{equation}{section}
\numberwithin{table}{section}
\begin{document}

\title{Generalized Pair Weights of Linear Codes and Linear Isomorphisms Preserving Pair Weights\footnote{
 E-Mail addresses: hwliu@mail.ccnu.edu.cn (H. Liu), panxu@mails.ccnu.edu.cn (X. Pan)}}

\author{Hongwei Liu,~Xu Pan}
\date{\small
School of Mathematics and Statistics, Central China Normal University\\Wuhan, Hubei, 430079, China\\
}
\maketitle

\begin{abstract}
In this paper, we first introduce the notion of generalized pair weights of an $[n, k]$-linear code over the finite field  $\mathbb{F}_q$ and the notion of pair $r$-equiweight codes, where $1\le r\le k-1$. Some basic properties of generalized pair weights of linear codes over finite fields are derived. Then we obtain a necessary and sufficient condition for an $[n,k]$-linear code to be a pair equiweight code, and we characterize pair $r$-equiweight codes for any $1\le r\le k-1$. Finally, a necessary and sufficient condition for a linear isomorphism preserving  pair weights between two linear codes is obtained.

\medskip
\textbf{Keywords}: generalized pair weights, pair equiweight codes, pair $r$-equiweight codes, linear isomorphisms preserving pair weights.

\medskip
\textbf{2010 Mathematics Subject Classification:}~94B05,  11T71.
\end{abstract}
\section{Introduction}
In 1950, Hamming introduced the notions of Hamming weight (usually written $w_H$) and Hamming distance (usually written $d_H$)
which would serve  as the basis for modern coding theory.
The notion of
generalized Hamming weights appeared  in the
1970's and has become an important research
object in coding theory after Wei's work \cite{W} in 1991.
More specifically,
let $\mathbb{F}_q$ be the finite field with  $q$ elements, where $q=p^e$ and $p$ is a prime.
An $[n,k]$-linear code $C$ of length $n$ over $\mathbb{F}_q$ is an $ \mathbb{F}_q$-subspace of dimension $k$ of $ \mathbb{F}_q^{n}$.
Let $r$ be an integer with $1\leq r\leq k$ and let $V$ be a subspace of dimension $r$ of $C$.
The Hamming support of $V$ is defined by $\chi_H(V)=\{i\,|\,0\leq i\leq n-1, \hbox{$\exists (c_0,\cdots,c_{n-1})\in V$ such that $c_i\neq0$}\}.$
Consequently, the {\em $r${\rm th} generalized Hamming weight}  of a
linear code $C$ over $\mathbb{F}_q$ is defined by
$d_H^r(C)=\min \{|\chi_H(V)|\,|\,V \,\hbox{is an $r$-dimensional subpace of $C$\}}$.
It is obvious that $d_H^1(C)$ is just the minimum Hamming
distance $d_H(C)$ and the set  $\{d_H^1(C), d_H^2(C), \cdots, d_H^k(C)\}$ is called the {\em generalized Hamming weight hierarchy}
of $C$.

Wei  \cite{W} showed  that the generalized Hamming weight hierarchy of a code is of great importance in the sense that it
features the performance of a linear
code completely
and has a close connection with cryptography;
a series of good properties  on the generalized Hamming  weight hierarchy of a code were also exhibited in \cite{W}.
Since then, lots of works have been done in computing and describing   the  generalized Hamming weight hierarchies  of certain codes,
see, for example,
 \cite{B},  \cite{JFW}, \cite{TV} and
 \cite{YF}.

The  MacWilliams extension theorem plays a central role in coding theory.
MacWilliams \cite{M} and later Bogart, Goldberg, and Gordon
\cite{BGG} proved that, every linear isomorphism preserving the Hamming weight between two linear codes over finite fields can be extended to a monomial transformation. This classical result was known as  MacWilliams extension theorem.
In \cite{J}, Wood proved MacWilliams extension theorem for all linear codes over finite Frobenius rings equipped with the Hamming weight. In the commutative case, the author showed that the Frobenius property was not only sufficient but also
necessary. In the non-commutative case, the necessity of the Frobenius property was proved in \cite{J1}.

With the development of information theory, a number of new metrics have been introduced to coding theory, for example, the Lee metric, the burst metric, homogeneous metric£¬ etc. In 2011, motivated by the limitations of the reading process in high density
data storage systems,  Cassuto and
Blaum \cite{CB} introduced a new metric framework, named symbol-pair distance,
to protect against pair errors in
symbol-pair read channels, where the outputs are overlapping
pairs of symbols.
The seminal work \cite{CB} has established
relationships between the minimum Hamming distance of
an error-correcting code and the minimum pair distance,
has found methods for code constructions and decoding,
and has obtained lower and upper bounds on the code sizes.
In \cite{C}, the authors established a Singleton-type bound for symbol-pair codes and constructed MDS symbol-pair codes (meeting this Singleton-type bound), which is called the {\em maximum pair distance separable} (MPDS) code in this paper.
Several works have been done on the constructions of MPDS codes, see, for example,
\cite{KZL},  \cite{LG}, \cite{DGZZ} and  \cite{CLL}.
In \cite{LXY},  Liu, Xing and Yuan presented the list decodability of symbol-pair codes and a list decoding algorithm of Reed-Solomon codes beyond the Johnson-type bound in the pair weight. In \cite{DNSS} and \cite{DWLS}, the authors calculated the symbol-pair distances of repeated-root constacyclic codes of lengths $p^s$ and $2p^s$, respectively. Yaakobi, Bruck and Siegel \cite{YBS} generalized the notion of symbol-pair weight to $b$-symbol weight.  Yang, Li and Feng \cite{YL} showed the Plotkin-like bound for the $b$-symbol weight and presented a construction on irreducible cyclic codes and constacyclic codes meeting the Plotkin-like bound.

As mentioned above,  symbol-pair distance is a new metric model compared to the classical Hamming distance.
Therefore, it is natural to ask how theorems surrounding classical coding theory generalize  to the current symbol-pair framework. This generalization would have some potential applications in cryptography. Indeed, as indicated in the proceeding paragraph, several bounds on the minimum symbol-pair distance have been established, including the Singleton-type bound, the Johnson-type bound and the   Plotkin-like bound.

In this paper, we introduce the notion of generalized pair weights of linear codes over finite fields, basic properties of generalized pair weights are derived. In particular, the Singleton Bound respect to generalized pair weights are established, and a necessary and sufficient condition for a linear code to be an MPDS code is obtained.  For an $[n,k]$-linear code, we introduce the notion of the pair equiweight code and the pair $r$-equiweight code for any $1\le r\le k-1$. A necessary and sufficient condition for a linear code to be a pair equiweight code is derived. Moreover, we characterize pair $r$-equiweight codes. Note that MacWilliams extension theorem tells that every linear isomorphism preserving the Hamming weight bewteen two lienar codes can be induced by a monomial matrix. Unfortunately, a linear isomorphism induced by a permutation matrix may not preserve the pair weight between two linear codes. In this paper, we provide a necessary and sufficient condition for a linear isomorphism preserving pair weights between two linear codes.

This paper is organized as follows. Section 2 provides some preliminaries, and we introduce generalized pair weights of linear codes,  and give a characterization of the pair weight of arbitrary codeword of a linear code. In Section~3, basic properties of generalized pair weights of linear codes are provided, and some other results are also given. In Section~4, we give a necessary and sufficient condition for a linear code to be a pair equiweight code. We obtain a necessary condition and a sufficient condition for an $[n,k]$-linear code to be a pair $r$-equiweight code. Section~5 studies linear isomorphisms preserving pair weights of linear codes, we obtain a necessary and sufficient condition for a linear isomorphism preserving pair weights. In particular, we provide an algorithm to determine whether a linear code is a pair equiweight code, and whether an isomorphism between two linear codes preserves pair weights. We explain why this algorithm is more efficiently.

\section{Preliminaries}

Throughout this paper, let $\mathbb{F}_q$ be the finite field of order $q$, where $q=p^e$ and $p$ is a prime number. Let $n$ be a positive integer, and let $\mathbb{F}_q^{n}$ be the $n$-dimensional vector space over $\mathbb{F}_q$. An $\mathbb{F}_q$-subspace $C$ of dimension $k$ of $\mathbb{F}_q^{n}$ is called an $[n,k]$-linear code. The dual code $C^{\perp}$ of $C$ is defined as
$$
C^{\perp}=\{{\bf x}\in \mathbb{F}_q^n \,| \,  {\bf c}\cdot{\bf x} =0, \forall \, {\bf c}\in C\},
$$
where $ ``-\cdot-" $ is the standard Euclidean inner product.  We assume all codes in this paper are nonzero linear codes.

\begin{Definition}\label{AA}(\cite{CB})
For any $\mathbf{x},\mathbf{y} \in \mathbb{F}_{q}^{n}$, the pair distance between $\mathbf{x}$ and $\mathbf{y}$ is defined as
$$d_{p}(\mathbf{x},\mathbf{y})=\big{|}\{0\leq i \leq n-1|(x_{i},x_{i+1})\neq(y_{i},y_{i+1})\}\big{|},$$ where the indices are taken modulo $n$. The pair weight of $\mathbf{x}$ is defined as $w_{p}(\mathbf{x})=d_{p}(\mathbf{x},{\bf 0}).$
\end{Definition}

The {\it minimal pair distance} of a code $C$ over $\mathbb{F}_q$ is defined as $$d_{p}(C)=\min_{\mathbf{c}\ne\mathbf{c'} \in C}\,d_{p}(\mathbf{c},\mathbf{c}').$$
The {\it minimal pair weight} of $C$ is defined as $\min\{w_p({\bf c})\,|\,{\bf 0}\ne {\bf c}\in C\}.$ Note that if $C$ is an $[n,k]$-linear code, then $d_{p}(C)=\min\{w_p({\bf c})\,|\,{\bf 0}\ne {\bf c}\in C\}$.

An $[n,k]$-linear code $C$ over $\mathbb{F}_q$ is called a {\it pair equiweight code} if any nonzero codeword of $C$ has the same pair weight.

The generalized Hamming weights of any $ \mathbb{F}_q$-subspace of $ \mathbb{F}_q^{n}$ and the $r$-minimal Hamming weight of an $[n,k]$-linear code $C$ over $\mathbb{F}_q$ for $1\leq r \leq k$ were defined by Wei \cite{W}.

\begin{Definition}\label{BB}(\cite{W})
Let $D$ be an $ \mathbb{F}_q$-subspace of $ \mathbb{F}_q^{n}$. The Hamming support of $D$, denoted by $\chi_H(D)$, is the set of all non-always-zero bit positions of $D$, i.e., $$\chi_H(D)=\{0\leq i \leq n-1\,|\,\exists \,\mathbf{x}=(x_{0}, x_1, \cdots,x_{n-1})\in D, x_{i}\neq0\},$$ and the generalized Hamming weight of $D$ is defined as $w_{H}(D)=|\chi_H(D)|$.
\end{Definition}

It is quite natural that we can assume $\chi_H(D)\subseteq \mathbb{Z}/n\mathbb{Z}$, the ring of integers modulo $n$.

\begin{Definition}\label{KK}(\cite{W})
Let $C$ be an $[n,k]$-linear code over $\mathbb{F}_q$. For $1\leq r \leq k$, the $r$-minimal Hamming weight of $C$ is defined as $d_{H}^{\,r}(C)=\min\{w_{H}(D)\,|\,D\leq C, \dim(D)=r\}$.
\end{Definition}

Note that if $r=1$, the $1$-minimal Hamming weight of $C$ is just the minimal Hamming weight of $C$.  In \cite{W}, the following result was proved.

\begin{lem} \label{monotonicity}(\cite[Theorem 1]{W})
Let $C$ be an $[n,k]$-linear code over $\mathbb{F}_q$. Then we have $$1\leq d_{H}^{\,1}(C)<d_{H}^{\,2}(C)<\cdots<d_{H}^{\,k-1}(C) < d_{H}^{\,k}(C)\leq n.$$
\end{lem}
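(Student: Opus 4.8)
The plan is to separate the two boundary inequalities from the chain of strict inequalities in the middle, since the endpoints follow from elementary observations. For the lower bound, any $1$-dimensional subspace $D\le C$ is spanned by a nonzero codeword, which has at least one nonzero coordinate, so $w_H(D)=|\chi_H(D)|\ge 1$ and hence $d_{H}^{\,1}(C)\ge 1$. For the upper bound, the only $k$-dimensional subspace of $C$ is $C$ itself, so $d_{H}^{\,k}(C)=w_H(C)=|\chi_H(C)|\le n$ because $\chi_H(C)\subseteq\{0,1,\ldots,n-1\}$. These two facts pin down the extreme terms.

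The heart of the argument is the strict inequality $d_{H}^{\,r}(C)<d_{H}^{\,r+1}(C)$ for each $1\le r\le k-1$. First I would choose an $(r+1)$-dimensional subspace $D\le C$ that achieves the minimum, so that $w_H(D)=d_{H}^{\,r+1}(C)$. Since $D\ne\{{\bf 0}\}$, its support $\chi_H(D)$ is nonempty; I would fix a position $i\in\chi_H(D)$, so that $x_i\ne 0$ for some $\mathbf{x}\in D$. The key construction is to pass to the subspace $D'=\{\mathbf{y}\in D\,|\,y_i=0\}$, the kernel of the coordinate functional $\mathbf{y}\mapsto y_i$ restricted to $D$.

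I would then verify two things. First, $\dim D'=r$: the functional is nonzero on $D$ (because $x_i\ne 0$), hence surjective onto $\mathbb{F}_q$, so its kernel has codimension exactly one in $D$. Second, $w_H(D')<w_H(D)$: since $D'\subseteq D$ we have $\chi_H(D')\subseteq\chi_H(D)$, while $i\notin\chi_H(D')$ because every element of $D'$ vanishes at position $i$; therefore $|\chi_H(D')|\le|\chi_H(D)|-1$. Combining these, $D'$ is an $r$-dimensional subspace of $C$, and so $d_{H}^{\,r}(C)\le w_H(D')\le w_H(D)-1=d_{H}^{\,r+1}(C)-1<d_{H}^{\,r+1}(C)$, which is exactly what is needed.

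The main obstacle, modest as it is, lies in the simultaneous control demanded by the key step: one must select the eliminated coordinate $i$ from \emph{inside} the support of $D$ so that passing to $D'$ both lowers the dimension by exactly one and strictly shrinks the support. The dimension count depends on the functional being nonzero, which holds precisely because $i$ was chosen in $\chi_H(D)$, and the strict decrease of the support rests on that same choice of $i$. Everything else reduces to routine linear algebra together with the monotonicity of $\chi_H$ under inclusion of subspaces.
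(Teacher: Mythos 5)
Your proof is correct. There is, however, nothing in the paper to compare it against: the lemma is quoted verbatim as Theorem~1 of Wei \cite{W}, and no proof is reproduced in the paper itself. Your argument --- pick a minimizing $(r+1)$-dimensional subspace $D$, choose $i\in\chi_H(D)$, and pass to the codimension-one kernel $D'=\{\mathbf{y}\in D\mid y_i=0\}$ --- is the standard proof of Wei's monotonicity, and it is in fact exactly the device this paper deploys for its pair-weight analogue (Theorem~\ref{pair monotonicity}) and for Corollary~\ref{equ condition}. The instructive contrast is that in the pair-weight setting the deleted coordinate must be chosen so that $i\in\chi_H(D)$ but $i+1\notin\chi_H(D)$ (which requires $w_H(D)<n$ and forces the separate treatment of the case $w_H(D)=n$); this is precisely why strictness breaks down at the top of the pair hierarchy, $d_p^{\,k-1}(C)\le d_p^{\,k}(C)$, whereas in your Hamming-weight argument any support position works, every element of $D'$ vanishes at $i$, and the full chain is strict.
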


The set $\{d_H^{\,1}(C), d_H^{\,2}(C),\cdots, d_H^{\,k}(C)\}$ is called the {\it generalized Hamming weight hierarchy} of $C$.
%Then we give core definitions of this paper which is the generalized pair weight and $r$-minimal pair weight.

 In 2003, Fan and Liu  \cite{FL}  introduced the Hamming $r$-equiweight code for an $[n,k]$ linear code over $\mathbb{F}_q$, where  $1\leq r\leq k-1$.

\begin{Definition}
Let $C$ be an $[n,k]$-linear code over $\mathbb{F}_q$ and $1\leq r\leq k-1$. The code  $C$ is called a Hamming $r$-equiweight code if $d_{H}^{\,r}(C)=w_{H}(D)$ for any subspace $D$ of dimension $r$ of $C$.
\end{Definition}
Note that if $r=1$, the Hamming $r$-equiweight code is just the Hamming equiweight code as usual. The properties of this class of codes are also obtained in \cite{FL}.
%Let $M_n(\mathbb{F}_q)$ be the set of all $n\times n$ matrices over $\mathbb{F}_q$. For $A\in M_n(\mathbb{F}_q)$, let $A^T$ denote the transpose of $A$. Let ${\rm GL}_{n}(\mathbb{F}_q)$ be the set of all $n\times n$ invertible matrixes over $\mathbb{F}_q$.
%A {\it monomial matrix} over $\mathbb{F}_q$ is a square matrix such that in every row and in every column there is exactly one nonzero element. Let ${\rm MO}_{n}(\mathbb{F}_q)$ denote the set of all the $n\times n$ monomial matrices over $\mathbb{F}_q$.

%The following theorem is called classical MacWilliams extension theorem, which was first proved by (MacWilliams \cite{M}).  Bogart, Goldberg, and Gordon provided an alternative proof in 1978 (\cite{BGG}).

%\begin{Proposition}[MacWilliams Extension Theorem (\cite{M},\cite{BGG})]\label{Mac}
%Let $C$ and $\tilde{C}$ be two $[n,k]$-linear codes over $\mathbb{F}_q$. Then there exists an $\mathbb{F}_{q}$-linear isomorphism $f:C\rightarrow \tilde{C}$ which preserves Hamming weights if and only if there exists a monomial matrix $M \in MO_{n}(\mathbb{F}_q)$ such that $f(\mathbf{c})=\mathbf{c}M$ for all $\mathbf{c}\in C$.
%\end{Proposition}

In this paper, we introduce the notion of generalized pair weights of any $ \mathbb{F}_q$-subspace of $ \mathbb{F}_q^{n}$ and $r$-minimal pair weight of $[n,k]$-linear codes over $\mathbb{F}_q$, where $1\leq r \leq k$. We will study their properties in this paper.

\begin{Definition}
Let $D$ be an $ \mathbb{F}_q$-subspace of $ \mathbb{F}_q^{n}$. The pair support of $D$ is defined as $$\chi_{p}(D)=\{0\leq i \leq n-1\,|\,\exists\, \mathbf{x}=(x_{0},\cdots,x_{n-1})\in D, (x_{i},x_{i+1})\neq(0,0)\},$$   where the indices are taken modulo $n$. The generalized pair weight of $D$ is defined as $w_{p}(D)=|\chi_{p}(D)|$.
\end{Definition}

\begin{Definition}
Let $C$ be an $[n,k]$-linear code over $\mathbb{F}_q$. For $1\leq r \leq k$, the $r$-minimal pair weight of $C$ is defined as $d_{p}^{\,r}(C)=\min\{w_{p}(D)\,|\,D\leq C, \dim(D)=r\}$. The set $\{d_p^{\,1}(C), d_p^{\,2}(C),\cdots, d_p^{\,k}(C)\}$ is called the  generalized pair weight hierarchy of $C$.

\end{Definition}
\begin{Remark}\label{2.8}
If $r=1$, the $1$-minimal pair weight  $d^{\,1}_p(C)$ of the code $C$ is just the minimal pair weight $d_p(C)$ of $C$. In \cite{C}, we know $d_{p}(C)\leq n-k+2$. If $C$ satisfies $d_{p}(C)= d_p^{\,1}(C)=n-k+2$, then we call $C$ a {\it maximum pair distance separable } (MPDS) code.
 \end{Remark}

%In \cite{F}, Forney introduced the definition of the LDP for the Hamming weight. Analogously, we give the definition of LDP for the pair weight which is essentially the same as the generalized pair weight hierarchy proved in Theorem~\ref{essent}.

Let $J$ be a subset of $\{0,1,\cdots,n-1\}$. The {\it subcode $C_J$ of  a code $C$ of length $n$ for pair weights} is defined to be: $$C_J=\{\mathbf{c}=(c_0,c_1,\cdots,c_{n-1})\in C \,|\,(c_i,c_{i+1})=(0,0)\,\,\,\forall \,i\notin J\}.$$

By the definition of $C_{J}$, we know that $C_J=C$ when $J=\{0,1,\cdots,n-1\}$ and $C_J=\mathbf{0}$ when $J=\emptyset$. Also we have $C_{J_1}\subseteq C_{J_2}$ if $J_1\subseteq J_2$.

\begin{Definition}
Let $C$ be an $[n,k]$-linear code over $\mathbb{F}_q$, let $J\subseteq \{0,1,\cdots,n-1\}$. Let $C_J$ be defined as above. For $1\le r\le k$, let $m_r(C)=\min\limits_J\{|J|\,|\,\dim(C_J)=r\}$. Then the following sequence is called  the {\it length/ dimension profile (LDP) for the pair weight} of $C$:
$$\mathbf{m}(C)=\{m_1(C),\,m_2(C)\,\cdots, m_k(C)\}.$$
\end{Definition}

%We are going to give a characterization of the pair weight of an element in a linear code $C$ used in Section 4 and Section 5. Before we do that, we need introduce some notations.

Let $U$ be an $\mathbb{F}_q$-vector space of dimension $k$. We denote by $\langle V, W\rangle$ the subspace generated by the subspaces $V, W$ of $U$, and let $U/W$ denote the quotient space modulo $W$. For any $ r, k\in \mathbb{N}$, let
\begin{align*}
  {\rm PG}^{r}(U)=\{V\leq U\,|\,\dim(V)=r\}&,  & {\rm PG}^{\leq r}(U)=\{V\leq U\,|\,\dim(V)\leq r\}.
\end{align*}
If $V=\{{\bf 0}\}$, then $\dim(\{\mathbf{0}\})=0$ and ${\rm PG}^{0}(U)=\{\{\mathbf{0}\}\}$.
Let $n_{r,k}$ denote the number of all $r$-dimensional subspaces of an $k$-dimensional vector space. When $r> k$, we let $n_{r,k}=0$. Then it is easy to see that
$$n_{r, k}= \left\{ \begin{array}{ll}
1,  & \textrm{if $r=0\ ;$}\\

\prod\limits_{i=0}^{r-1}\frac{q^{k}-q^{i}}{q^{r}-q^{i}},  & \textrm{if $1\leq r \leq k ;$}\\

0,  & \textrm{if $r> k .$}
\end{array} \right.
$$
Let $C$ be an $[n,k]$-linear code with a generator matrix $G=(G_{0},\cdots,G_{n-1})$,  where $G_i$ is the column vector of $G$. For any $V\in {\rm PG}^{\leq 2}(\mathbb{F}_q^{k})$, the function $m_{G}: {\rm PG}^{\leq 2}(\mathbb{F}_q^{k}) \to \mathbb{N}$ is defined as follows.
$$
m_{G}(V)=\big{|}\{0\leq i\leq n-1\,|\,\langle G_{i},G_{i+1}\rangle= V\}\big{|},
$$  where the indices are taken modulo $n$.
We define the function $\theta_{G}: {\rm PG}^{\leq k}(\mathbb{F}_q^{k}) \to \mathbb{N}$ to be $$\theta_{G}(U)=\sum_{V\in {\rm PG}^{\leq 2}(U)}m_{G}(V)$$ for any $U\in {\rm PG}^{\leq k}(\mathbb{F}_q^{k})$.

For an $[n,k]$-linear code $C$ over $\mathbb{F}_q$ with a generator matrix $G$, we know that for any $1\leq r \leq k $ and a subspace $D$ of dimension $r$ of $C$, there exists an unique subspace $\tilde{D}$ of dimension $r$ of $\mathbb{F}_q^k$ such that $D=\tilde{D}G=\{{\bf y}G\,|\,{\bf y}\in \tilde{D} \}$. In particular, for any nonzero codeword ${\bf c}\in C$, there exists an unique nonzero vector ${\bf y}\in \mathbb{F}_q^k$ such that ${\bf c}={\bf y}G=({\bf y}G_0,{\bf y}G_1,\cdots, {\bf y}G_{n-1})$, where $G=(G_0,\cdots, G_{n-1})$.

\begin{Proposition} \label{r pair weight}
Assume the notations are given above. Then $w_{p}(D)= n-\theta_{G}(\tilde{D}^{\bot})$ for any subspace $D$ of $C$, where $\tilde{D}$ is the unique corresponding subspace of $D$. In particular, $w_{p}(\mathbf{c})= n-\theta_{G}(\langle \mathbf{y}\rangle^{\bot})$ for any $0\neq\mathbf{c}\in C$.
\end{Proposition}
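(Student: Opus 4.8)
The plan is to compute the complementary quantity $n-w_p(D)$ directly, by identifying exactly which coordinate positions $i$ fail to lie in the pair support $\chi_p(D)$, and then matching this count term-by-term against the definition of $\theta_G$. Since $w_p(D)=|\chi_p(D)|$, we have $n-w_p(D)=|\{0\le i\le n-1\,|\,i\notin \chi_p(D)\}|$, so it suffices to understand the condition $i\notin\chi_p(D)$. By the definition of the pair support, $i\notin\chi_p(D)$ means that $(x_i,x_{i+1})=(0,0)$ for every $\mathbf{x}\in D$. Writing each such $\mathbf{x}$ uniquely as $\mathbf{x}=\mathbf{y}G$ with $\mathbf{y}\in\tilde D$, and using $x_i=\mathbf{y}G_i$ and $x_{i+1}=\mathbf{y}G_{i+1}$, this is equivalent to $\mathbf{y}G_i=\mathbf{y}G_{i+1}=0$ for all $\mathbf{y}\in\tilde D$.

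Next I would recast this vanishing condition as a containment of subspaces. The condition $\mathbf{y}G_i=\mathbf{y}G_{i+1}=0$ for all $\mathbf{y}\in\tilde D$ says precisely that $\tilde D\subseteq\langle G_i,G_{i+1}\rangle^{\bot}$, and since $(\tilde D^{\bot})^{\bot}=\tilde D$ and orthogonal complementation is order-reversing, this is in turn equivalent to $\langle G_i,G_{i+1}\rangle\subseteq\tilde D^{\bot}$. Because $\langle G_i,G_{i+1}\rangle$ is spanned by two vectors it has dimension at most $2$, so this containment is exactly the statement that $\langle G_i,G_{i+1}\rangle\in{\rm PG}^{\leq 2}(\tilde D^{\bot})$. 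Thus $i\notin\chi_p(D)$ holds if and only if $\langle G_i,G_{i+1}\rangle$ is one of the subspaces of dimension at most $2$ contained in $\tilde D^{\bot}$.

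Finally I would reassemble the count. Partitioning the positions $i$ with $i\notin\chi_p(D)$ according to the value $V:=\langle G_i,G_{i+1}\rangle\in{\rm PG}^{\leq 2}(\tilde D^{\bot})$, and recalling that $m_G(V)$ is by definition the number of $i$ with $\langle G_i,G_{i+1}\rangle=V$, I obtain $n-w_p(D)=\sum_{V\in{\rm PG}^{\leq 2}(\tilde D^{\bot})}m_G(V)=\theta_G(\tilde D^{\bot})$, which rearranges to $w_p(D)=n-\theta_G(\tilde D^{\bot})$. The in-particular statement is the special case $\dim D=1$ with $\tilde D=\langle\mathbf{y}\rangle$, once one observes that $w_p(\mathbf{c})=w_p(\langle\mathbf{c}\rangle)$: scaling $\mathbf{c}$ by a nonzero scalar does not change which pairs $(c_i,c_{i+1})$ are nonzero, so the single-codeword pair weight of Definition~\ref{AA} agrees with the generalized pair weight of the line it spans.

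I expect no serious obstacle here; the computation is essentially bookkeeping. The only step demanding care is the duality passage, where I must correctly invoke the order-reversing equivalence $\tilde D\subseteq\langle G_i,G_{i+1}\rangle^{\bot}\iff\langle G_i,G_{i+1}\rangle\subseteq\tilde D^{\bot}$, and then use the dimension bound $\dim\langle G_i,G_{i+1}\rangle\le 2$ to guarantee that $V$ lies in the index set ${\rm PG}^{\leq 2}(\tilde D^{\bot})$ on which $\theta_G$ sums. This is what ensures that each position $i\notin\chi_p(D)$ is counted once and only once when the sum is regrouped by the value of $\langle G_i,G_{i+1}\rangle$.
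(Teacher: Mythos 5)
Your proposal is correct and follows essentially the same route as the paper's proof: both compute the complement $n-w_p(D)$, translate the vanishing condition through $\mathbf{x}=\mathbf{y}G$ into $\mathbf{y}G_i=\mathbf{y}G_{i+1}=0$ for all $\mathbf{y}\in\tilde D$, recast it as $\langle G_i,G_{i+1}\rangle\subseteq\tilde D^{\bot}$, and partition the remaining positions by the value of $\langle G_i,G_{i+1}\rangle$ to recover $\theta_G(\tilde D^{\bot})$. Your explicit justification of $w_p(\mathbf{c})=w_p(\langle\mathbf{c}\rangle)$ via scaling is a small detail the paper leaves implicit, but it does not change the argument.
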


\begin{proof}
By the definition of $w_{p}$ and the function $\theta_{G}$, we have
\begin{align*}
  w_{p}(D) & =\big{|}\{0\leq i \leq n-1\,|\,\exists\,\mathbf{c}=(c_{0},c_{1},\cdots,c_{n-1})\in D,\,(c_{i},c_{i+1})\neq (0,0)\}\big{|} \\
   & =n-\big{|}\{0\leq i \leq n-1\,|\,\forall\,\mathbf{c}=(c_{0},c_{1},\cdots,c_{n-1})\in D,\,(c_{i},c_{i+1})= (0,0)\}\big{|} \\
   &  =n-\big{|}\{0\leq i \leq n-1\,|\,\forall\,\mathbf{y}\in \tilde{D}, \,\mathbf{y}G_{i}=\mathbf{y}G_{i+1}=0\,\,\}\big{|}\\
    &  =n-\big{|}\{0\leq i \leq n-1\,|\,\langle G_{i},G_{i+1}\rangle\subseteq\tilde{D}^{\bot}\}\big{|}  \\
    &=n-\sum_{V\in {\rm PG}^{\leq 2}(\tilde{D}^{\bot})}\big{|}\{0\leq i \leq n-1\,|\,\langle G_{i},G_{i+1}\rangle= V\}\big{|}\\
&=n-\sum_{V\in {\rm PG}^{\leq 2}(\tilde{D}^{\bot})}m_{G}(V)=n-\theta_{G}(\tilde{D}^{\bot}).
\end{align*}
In particular, when we take $D=\langle {\bf c}\rangle$ to be the $1$-dimensional subspace generated by the codeword ${\bf c}\in C$, then $w_p(\langle {\bf c}\rangle)=w_p({\bf c})=n-\theta_{G}(\langle \mathbf{y}\rangle^{\bot})$.
\end{proof}

%\begin{Remark}\label{n0}
%The dimension of $\langle G_{i},G_{i+1}\rangle$ could be $0$ for a generator matrix $G=(G_{0},\cdots,G_{n-1})$ of an $[n,k]$-linear code $C$. If $\langle G_{i},G_{i+1}\rangle=0$, we can construct a new linear code $\tilde{C}$ with a generator matrix $\tilde{G}=(G_{0},\cdots,G_{i},G_{i+2},\cdots,G_{n-1})$ and a linear isomorphism from $C$ to $\tilde{C}$ keeping the pair weight invariant. Without loss of generality, we will assume that $n_0(G)=0$ for a generator matrix $G=(G_{0},\cdots,G_{n-1})$ of a linear code $C$ in the rest of the paper.
% \end{Remark}

%Analogously, we give the definition of the pair $r$-equiweight code over $\mathbb{F}_q$ for $1\leq r\leq k-1$.
\begin{Definition}\label{CC}
Let $C$ be an $[n,k]$-linear code over $\mathbb{F}_q$ and $1\leq r\leq k-1$, we say that $C$ is a pair $r$-equiweight code if $d_{p}^{\,r}(C)=w_{p}(D)$ for any subspace $D$ of dimension $r$ of $C$.
\end{Definition}

\begin{Remark}
If $r=1$, the pair $1$-equiweight code is just the pair equiweight code. However, a Hamming equiweight code is not a pair equiweight code in general.
 \end{Remark}

\begin{Example}
Let $C_{1}$ be the linear code with a generator matrix $\left(\begin{array}{cccc}
                       1 & 0&1&0 \\
                        0 &1&0&1

\end{array}\right) $ over $\mathbb{F}_2$. Then $C_{1}$ is a pair equiweight code but not a Hamming equiweight code. Let $C_{2}$ be the linear code with a generator matrix $\left(\begin{array}{cccc}
                       1 & 1&0&0 \\
                        0 &1&1&0

\end{array}\right) $ over $\mathbb{F}_2$. Then $C_{2}$ is a Hamming equiweight code but not a pair equiweight code.
\end{Example}

The following proposition provides a method to construct a pair equiweight code from a Hamming equiweight code.

\begin{Proposition}\label{construct}
Let $C$ be an $[n,k]$-linear code over $\mathbb{F}_q$ with a generator matrix $G=(G_{0},\cdots,G_{n-1})$, and let $\hat{C}$ be a $[2n,k]$-linear code over $\mathbb{F}_q$ with a generator matrix $\hat{G}=(G_{0},O,\cdots,G_{n-1},O)$, where $O$ is the column zero vector of length $k$. Then for any $1\leq r \leq k-1$, $C$ is a Hamming $r$-equiweight code if and only if $\hat{C}$ is a pair $r$-equiweight code.

\end{Proposition}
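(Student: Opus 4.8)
The plan is to prove the stronger, more structural fact that the construction simply doubles the generalized weight: for every subspace $\tilde D\le\mathbb{F}_q^k$, writing $D=\tilde D G\le C$ and $\hat D=\tilde D\hat G\le\hat C$ for the corresponding subspaces, I claim $w_p(\hat D)=2\,w_H(D)$. Since $\hat G$ contains all the columns of $G$, it still has rank $k$, so $\hat C$ is a genuine $[2n,k]$-code, and the assignment $\tilde D\mapsto(D,\hat D)$ is a dimension-preserving bijection between the $r$-dimensional subspaces of $C$ and those of $\hat C$. Granting the weight identity, the proposition is immediate: $C$ is Hamming $r$-equiweight exactly when all $r$-dimensional $D$ share a common $w_H(D)$, which by the identity happens exactly when all $r$-dimensional $\hat D$ share the common value $2\,w_H(D)$, i.e. exactly when $\hat C$ is pair $r$-equiweight.

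First I would apply Proposition~\ref{r pair weight} to $\hat C$ (of length $2n$), giving $w_p(\hat D)=2n-\theta_{\hat G}(\tilde D^{\bot})=2n-\sum_{V\in{\rm PG}^{\le2}(\tilde D^{\bot})}m_{\hat G}(V)$. The heart of the argument is to evaluate $m_{\hat G}$. Because $\hat G=(G_0,O,G_1,O,\cdots,G_{n-1},O)$ alternates each column $G_j$ with a zero column, every consecutive pair of columns of $\hat G$ involves at most one nonzero column: the pair at an even position $2j$ is $(G_j,O)$ and the pair at an odd position $2j+1$ is $(O,G_{(j+1)\bmod n})$, where indices are read modulo $2n$. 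Hence $\langle\hat G_i,\hat G_{i+1}\rangle$ always equals $\langle G_j\rangle$ for a single $j$ and is never $2$-dimensional, so $m_{\hat G}(V)=0$ for every $V\in{\rm PG}^2(\mathbb{F}_q^k)$. Moreover each $\langle G_j\rangle$ is produced exactly twice (once from position $2j$, once from position $2j-1$), so for every $V$ with $\dim V\le1$ one gets $m_{\hat G}(V)=2\,|\{0\le j\le n-1\,|\,\langle G_j\rangle=V\}|$; the degenerate case $V=\{\mathbf{0}\}$, arising from columns $G_j=O$, must be included and obeys the same doubling.

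Substituting this into the sum and recalling the Hamming analogue $w_H(D)=n-|\{0\le i\le n-1\,|\,\langle G_i\rangle\subseteq\tilde D^{\bot}\}|=n-\sum_{V\in{\rm PG}^{\le1}(\tilde D^{\bot})}|\{i\,|\,\langle G_i\rangle=V\}|$ (proved exactly as in Proposition~\ref{r pair weight} from $\mathbf y G_i=0\iff G_i\in\tilde D^{\bot}$), I would collapse ${\rm PG}^{\le2}(\tilde D^{\bot})$ to ${\rm PG}^{\le1}(\tilde D^{\bot})$ since the $2$-dimensional terms vanish, and factor out the $2$ to obtain $w_p(\hat D)=2\big(n-\sum_{V\in{\rm PG}^{\le1}(\tilde D^{\bot})}|\{i\,|\,\langle G_i\rangle=V\}|\big)=2\,w_H(D)$, which finishes the identity and hence the proposition.

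The step I expect to be the main obstacle is the careful bookkeeping of $m_{\hat G}$: one must verify both that no $2$-dimensional span can ever occur (this is precisely what the interleaved zero columns guarantee) and that the multiplicity of each $1$-dimensional or zero span is exactly doubled, being cautious about the wrap-around pair at positions $2n-1,0$ and about columns of $G$ that may themselves be zero.
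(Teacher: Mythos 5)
Your proposal is correct, but it reaches the key identity by a genuinely different route than the paper. The paper's own proof is a one-liner: it introduces the interleaving map $\varphi(\mathbf{c})=(c_{0},0,\cdots,c_{n-1},0)$, observes that $\varphi$ is an $\mathbb{F}_q$-linear isomorphism from $C$ onto $\hat{C}$ satisfying $w_{p}(\varphi(\mathbf{c}))=2w_{H}(\mathbf{c})$ (every pair of adjacent positions in $\varphi(\mathbf{c})$ involves exactly one coordinate of $\mathbf{c}$, so the pair support is the doubled Hamming support), and declares the rest trivial. You instead push the whole computation through the generator-matrix formalism of Proposition~\ref{r pair weight}: the interleaved zero columns force every span $\langle \hat{G}_{i},\hat{G}_{i+1}\rangle$ to be at most one-dimensional and to realize each $\langle G_{j}\rangle$ exactly twice (your bookkeeping of the wrap-around pair at positions $2n-1,0$ and of possible zero columns of $G$ is right), so $m_{\hat{G}}$ vanishes on ${\rm PG}^{2}(\mathbb{F}_q^{k})$ and doubles on ${\rm PG}^{\leq 1}(\mathbb{F}_q^{k})$, giving $w_{p}(\hat{D})=2w_{H}(D)$ for every subspace $\hat{D}=\tilde{D}\hat{G}$. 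Both arguments rest on the same combinatorial fact, but yours has the advantage of establishing the identity directly at the level of $r$-dimensional subspaces, which is what Definition~\ref{CC} actually requires when $r\geq 2$ and which the paper's codeword-level statement leaves implicit in ``the rest of the proof is trivial''; it also exercises the $m_{G}$/$\theta_{G}$ machinery that the paper develops for its later theorems. What the paper's version buys is brevity and transparency: the support-doubling is visible with no counting apparatus at all.
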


\begin{proof}
Let $\varphi$ be a map from $C$ to $\hat{C}$ such that $\varphi(\mathbf{c})=(c_{0},0,\cdots,c_{n-1},0)\in \hat{C}$ for any $\mathbf{c}=(c_{0},\cdots,c_{n-1})\in C$. Then $\varphi$ is an $\mathbb{F}_q$-linear isomorphism and $w_{p}(\varphi(\mathbf{c}))=2w_{H}(\mathbf{c})$. The rest part of the proof is trivial.
\end{proof}

\section{Generalized pair weights of linear codes}
In this section, we give  general properties of generalized pair weights of linear codes. Some bounds about generalized pair weights of linear codes are obtained in this section.

We first give a characterization on the relationship between the generalized Hamming weight $w_{H}(D)$ and the generalized pair weight $w_{p}(D)$ for any $\mathbb{F}_q$-subspace $D$ of  $\mathbb{F}_q^{n}$. If $w_{H}(D)=n$, then $w_{p}(D)=n$. If  $w_{H}(D)<n$, we have the following lemma.

\begin{lem} \label{relationship-1}
Let $D$ be an $\mathbb{F}_q$-subspace of $\mathbb{F}_q^{n}$, and suppose $w_{H}(D)<n$. Assume that $$\chi_{H}(D)=\bigcup_{l=1}^{L}\{s_{l},s_{l}+1,\cdots,s_{l}+e_{l}\}\subseteq \mathbb{Z}/n\mathbb{Z}$$ and $ |s_{l}-s_{l-1}-e_{l-1}|\geq 2$ for $1\leq l\leq L$ where $s_{0}=s_{L}$ and $e_{0}=e_{L}$. Then $w_{p}(D)=w_{H}(D)+L$.
\end{lem}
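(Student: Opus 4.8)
The plan is to reduce the statement to the already-understood Hamming support $\chi_H(D)$ by first expressing $\chi_p(D)$ directly in terms of it. The crucial observation is pointwise: for a fixed index $i$, the condition $i\in\chi_p(D)$ asserts the existence of a single $\mathbf{x}\in D$ with $(x_i,x_{i+1})\neq(0,0)$, and since ``some $\mathbf{x}$ has $x_i\neq 0$ or $x_{i+1}\neq 0$'' splits as ``($\exists\,\mathbf{x},\,x_i\neq0$) or ($\exists\,\mathbf{x},\,x_{i+1}\neq0$)'', one obtains the clean equivalence
\[
i\in\chi_p(D)\iff i\in\chi_H(D)\ \text{or}\ i+1\in\chi_H(D).
\]
Negating, $i\notin\chi_p(D)$ holds exactly when both $i$ and $i+1$ are always-zero positions of $D$. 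Hence $\chi_p(D)=\chi_H(D)\cup(\chi_H(D)-1)$, where $\chi_H(D)-1:=\{j-1\mid j\in\chi_H(D)\}$ denotes the cyclic shift in $\mathbb{Z}/n\mathbb{Z}$. I would record this as a short standalone claim, since it is the exact pair analogue of the Hamming support and is independently useful.

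With this, I would compute $|\chi_p(D)|$ block by block. Writing $\chi_H(D)=\bigcup_{l=1}^{L}\{s_l,\dots,s_l+e_l\}$, each block contains $e_l+1$ indices, so $w_H(D)=\sum_{l=1}^{L}(e_l+1)$. Shifting the $l$-th block down by one yields $\{s_l-1,\dots,s_l+e_l-1\}$, and the union of a block with its own shift is $\{s_l-1,s_l,\dots,s_l+e_l\}$; that is, the block together with the single new index $s_l-1$ lying immediately to its left. Consequently $\chi_p(D)=\chi_H(D)\cup\{s_1-1,\dots,s_L-1\}$, and it remains only to verify that these $L$ ``head'' indices are genuinely new and pairwise distinct.

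The one substantive point, and where the hypothesis enters, is the behaviour of the indices $s_l-1$. The separation hypothesis $|s_l-s_{l-1}-e_{l-1}|\geq 2$ is precisely the statement that consecutive blocks are separated by a gap containing at least one empty position; equivalently, $s_l-1$ lies strictly inside the hole preceding the $l$-th block, so $s_l-1\notin\chi_H(D)$. Because distinct blocks occupy distinct holes, and since $w_H(D)<n$ guarantees at least one hole exists (so the decomposition is nondegenerate), the $L$ indices $s_1-1,\dots,s_L-1$ are distinct elements lying outside $\chi_H(D)$. Therefore $w_p(D)=|\chi_p(D)|=w_H(D)+L$. I expect the main obstacle to be purely combinatorial bookkeeping: handling the cyclic wrap-around correctly (with $s_0,e_0$ identified with $s_L,e_L$) and confirming that the bound $\geq 2$ is exactly what prevents a head index from colliding with a neighbouring block or with another head, particularly in the edge case $L=1$, where the single block wraps around the ring and its unique head $s_1-1$ must be checked to fall in the complementary arc.
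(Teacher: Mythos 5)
Your proposal is correct and follows essentially the same route as the paper: both arguments show that $\chi_p(D)$ is obtained by extending each block of $\chi_H(D)$ one position to the left, and both invoke the separation hypothesis $|s_l-s_{l-1}-e_{l-1}|\geq 2$ to see that the extended blocks stay disjoint, giving the count $w_H(D)+L$. Your explicit two-sided identity $\chi_p(D)=\chi_H(D)\cup(\chi_H(D)-1)$ is just a slightly more formal packaging of the pointwise observation the paper uses.
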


\begin{proof}
If $i\in\chi_H(D)$, there exists $\mathbf{x}=(x_{0},\cdots,x_{n-1})\in D$ such that $x_{i}\neq 0$. Then the two pairs $(x_{i-1},x_{i})$ and $(x_{i},x_{i+1})$ both are not $(0,0)$ and $\{i-1,i\}\subseteq \chi_{p}(D)$. Here, when $i=0$, $i-1=n-1$. Hence $$\chi_{p}(D)=\cup_{l=1}^{L}\{s_{l}-1,s_{l},s_{l}+1,\cdots,s_{l}+e_{l}\}.$$ Since $ |s_{l}-s_{l-1}-e_{l-1}|\geq 2$, we have
$$
\{s_{l-1}-1,s_{l-1},s_{l-1}+1,\cdots,s_{l-1}+e_{l-1}\}\cap\{s_{l}-1,s_{l},s_{l}+1,\cdots,s_{l}+e_{l}\}=\varnothing
$$ for $1\leq l\leq L$, where $s_{0}=s_{L}$ and $e_{0}=e_{L}$. Hence $w_{p}(D)=|\chi_{p}(D)|=|\chi_H(D)|+L=w_{H}(D)+L$.
\end{proof}

\begin{Theorem}
Let $C$ be an $[n,k]$-linear code over $\mathbb{F}_q$. Then we have
\begin{description}
             \item[(a)] If $1\leq r\leq k-1$, or $r=k$  and $d_{H}^{\,k}(C)<n$, then $d_{H}^{\,r}(C)+1\leq d_{p}^{\,r}(C)\leq 2d_{H}^{\,r}(C)$.
             \item[(b)] If $r=k$  and $d_{H}^{\,k}(C)=n$ then $d_{p}^{\,k}(C)=n$.

  \end{description}

\end{Theorem}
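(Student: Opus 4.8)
The plan is to reduce both parts to a single pointwise estimate relating $w_p(D)$ and $w_H(D)$ for an arbitrary subspace $D$, and then to lift that estimate to the $r$-minimal weights using the strict monotonicity of the Hamming weight hierarchy (Lemma \ref{monotonicity}).

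First I would record the key pointwise bounds. Let $D$ be a subspace with $\dim(D)=r\ge 1$. If $w_H(D)<n$, write $\chi_H(D)$ as a disjoint union of $L$ maximal cyclic runs as in Lemma \ref{relationship-1}, so that $w_p(D)=w_H(D)+L$. Since $D\ne\{\mathbf 0\}$ forces $\chi_H(D)\ne\varnothing$, we have $L\ge 1$; and since each run $\{s_l,\dots,s_l+e_l\}$ contributes $e_l+1\ge 1$ positions to $\chi_H(D)$, summing gives $w_H(D)=\sum_{l=1}^L(e_l+1)\ge L$. Hence $1\le L\le w_H(D)$, which yields
\[
w_H(D)+1\ \le\ w_p(D)\ \le\ 2\,w_H(D)\qquad\text{whenever }w_H(D)<n.
\]
If instead $w_H(D)=n$, then $\chi_H(D)=\mathbb Z/n\mathbb Z$, and since $i\in\chi_H(D)$ forces $i\in\chi_p(D)$, we get $\chi_p(D)=\mathbb Z/n\mathbb Z$ and $w_p(D)=n$. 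Part (b) is then immediate: taking $D=C$, the hypothesis $d_H^{\,k}(C)=w_H(C)=n$ falls into the second case, so $w_p(C)=d_p^{\,k}(C)=n$.

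For part (a) I would establish the two inequalities separately, witnessed by different extremal subspaces. For the upper bound, choose $D_0$ of dimension $r$ with $w_H(D_0)=d_H^{\,r}(C)$. The stated hypothesis (either $r\le k-1$, or $r=k$ with $d_H^{\,k}(C)<n$) together with Lemma \ref{monotonicity} guarantees $d_H^{\,r}(C)<n$, so $w_H(D_0)<n$ and the pointwise bound gives $d_p^{\,r}(C)\le w_p(D_0)\le 2w_H(D_0)=2d_H^{\,r}(C)$. For the lower bound, choose $D_1$ of dimension $r$ with $w_p(D_1)=d_p^{\,r}(C)$ and split on $w_H(D_1)$: if $w_H(D_1)<n$ the pointwise bound gives $w_p(D_1)\ge w_H(D_1)+1\ge d_H^{\,r}(C)+1$; if $w_H(D_1)=n$ then $w_p(D_1)=n$, and since the same monotonicity argument forces $d_H^{\,r}(C)\le n-1$, once more $w_p(D_1)\ge d_H^{\,r}(C)+1$. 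Either way $d_p^{\,r}(C)\ge d_H^{\,r}(C)+1$.

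The step I expect to be the main obstacle is the boundary case $w_H(D)=n$, where the clean pointwise inequality $w_p(D)\ge w_H(D)+1$ degenerates to $w_p(D)=w_H(D)=n$ and so fails at the level of a single subspace. The resolution is that this case can only occur when $d_H^{\,r}(C)\le n-1$, which is exactly what strict monotonicity supplies under the stated hypotheses, so the lost ``$+1$'' is recovered at the code level. One must also be careful that the upper and lower bounds are witnessed by genuinely different subspaces — the Hamming-minimizer $D_0$ need not minimize the pair weight — so the argument cannot proceed with a single fixed $D$ throughout.
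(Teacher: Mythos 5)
Your proof is correct and takes essentially the same route as the paper: both arguments rest on Lemma~\ref{relationship-1} (giving $w_p(D)=w_H(D)+L$ with $1\le L\le w_H(D)$) together with Lemma~\ref{monotonicity} to guarantee $d_H^{\,r}(C)<n$, with the upper bound witnessed by a Hamming-weight minimizer and the lower bound by a pair-weight minimizer. The only cosmetic difference is in the degenerate case $w_H(D_1)=n$: you conclude directly from $d_p^{\,r}(C)=n\ge d_H^{\,r}(C)+1$, whereas the paper swaps $D_1$ for the Hamming minimizer (noting it must also achieve pair weight $n$) and reapplies the lemma --- the same idea in different packaging.
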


\begin{proof}
%\begin{description}
(a) Suppose  $1\leq r\leq k-1$. Let $D$ be an $\mathbb{F}_q$-subspace of $C$ such that $\dim(D)=r$ and $d_{p}^{\,r}(C)=w_{p}(D)$. If $w_H(D)=|\chi_H(D)|=n$, then $$d_{p}^{\,r}(C)=w_p(D)=|\chi_{p}(D)|=n.$$ By Lemma~\ref{monotonicity}, there exists an $\mathbb{F}_q$-subspace $\tilde{D}$ of $C$ such that $\dim(\tilde{D})=r$ and $w_{H}(\tilde{D})=d_{H}^{\,r}(C)<n$. Then $n=w_p(D)=d_{p}^{\,r}(C)\leq w_p(\tilde{D})$ and hence
$$
w_p(D)=d_{p}^{\,r}(C)=w_p(\tilde{D})
$$
with $w_{H}(\tilde{D})=d_{H}^{\,r}(C)<n$. Therefore, without loss of generality, we can assume that $w_H(D)<n$. Then by Lemma~\ref{relationship-1}, we have $w_{p}(D)=w_{H}(D)+L$. Hence$$d_{p}^{\,r}(C)=w_{p}(D)=w_{H}(D)+L\geq w_{H}(D)+1\geq d_{H}^{\,r}(C)+1.$$
Let $E$ be an $\mathbb{F}_q$-subspace of $C$ such that $\dim(E)=r$ and $d_{H}^{\,r}(C)=w_{H}(E)$. Since $d_{H}^{\,r}(C)=w_{H}(E)<n$,  by Lemma~\ref{relationship-1},  we have $w_{p}(E)=w_{H}(E)+L_{1}$. Hence $d_{p}^{\,r}(C)\leq w_{p}(E)=w_{H}(E)+L_{1}\leq 2w_{H}(E)=2d_{H}^{\,r}(C)$.

If $r=k$ and $d_{H}^{\,k}(C)<n$,  we have $w_{p}(C)=w_{H}(C)+L_2$ by Lemma~\ref{relationship-1} since $|\chi_H(C)|=d_{H}^{\,k}(C)<n$. Hence
                    $$ d_{H}^{\,k}(C)+1=w_{H}(C)+1\leq d_{p}^{\,k}(C)=w_{p}(C)=w_{H}(C)+L_{2}\leq 2w_{H}(C)=2d_{H}^{\,k}(C).$$

(b) If $d_{H}^{\,k}(C)=|\chi_{H}(C)|=n$, then $d_{p}^{\,k}(C)= |\chi_{p}(C)|=n$.
\end{proof}

Note that, if $r=1$, we have $d_{H}^{\,1}(C)+1\leq d_{p}^{\,1}(C)\leq 2d_{H}^{\,1}(C)$, this is the usual relationship between the minimal pair weight and minimal Hamming weight of the linear code $C$.
\begin{Theorem}\label{pair monotonicity}
Let $C$ be an $[n,k]$-linear code over $\mathbb{F}_q$ with $n \ge 2$. Then we have $$2\leq d_{p}^{\,1}(C)<d_{p}^{\,2}(C)<\cdots<d_{p}^{\,k-1}(C) \leq d_{p}^{\,k}(C)\leq n.$$
\end{Theorem}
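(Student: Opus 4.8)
The plan is to establish the chain one link at a time: the lower bound $d_p^{1}(C)\ge 2$, the upper bound $d_p^{k}(C)\le n$, the strict inequalities $d_p^{r}(C)<d_p^{r+1}(C)$ for $1\le r\le k-2$, and finally the weak inequality $d_p^{k-1}(C)\le d_p^{k}(C)$. The engine throughout is the elementary observation already isolated in the proof of Lemma~\ref{relationship-1}: a position $i$ lies in $\chi_p(D)$ if and only if $i\in\chi_H(D)$ or $i+1\in\chi_H(D)$; equivalently $\chi_p(D)=\chi_H(D)\cup(\chi_H(D)-1)$, so in particular $\chi_H(D)\subseteq\chi_p(D)$. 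Since $D'\le D$ forces $\chi_H(D')\subseteq\chi_H(D)$ and hence $\chi_p(D')\subseteq\chi_p(D)$, passing to a subspace can only shrink the pair support.

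For the two boundary terms and the weak step, the arguments are short. The upper bound is immediate from $\chi_p(C)\subseteq\{0,\dots,n-1\}$, which gives $d_p^{k}(C)=w_p(C)\le n$. For the lower bound, any nonzero $\mathbf c\in C$ has a coordinate $c_i\ne 0$, so both $i-1$ and $i$ lie in $\chi_p(\langle\mathbf c\rangle)$; as $n\ge 2$ these indices are distinct modulo $n$, whence $w_p(\mathbf c)\ge 2$ and therefore $d_p^{1}(C)\ge 2$. This is precisely where the hypothesis $n\ge 2$ enters. The weak inequality $d_p^{k-1}(C)\le d_p^{k}(C)$, and in fact $d_p^{r}(C)\le d_p^{r+1}(C)$ for every $r$, follows by choosing any hyperplane $D'$ of a dimension-$(r+1)$ subspace $D$ realizing $d_p^{r+1}(C)$ and using $\chi_p(D')\subseteq\chi_p(D)$.

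The substance is the strict inequality for $1\le r\le k-2$. Fix a subspace $D$ with $\dim D=r+1$ and $w_p(D)=d_p^{r+1}(C)$; I want a hyperplane $D'\le D$ with $\chi_p(D')\subsetneq\chi_p(D)$. First I reduce to the case $w_H(D)<n$. If $d_p^{r+1}(C)<n$ this is automatic, since $w_H(D)\le w_p(D)=d_p^{r+1}(C)<n$. If $d_p^{r+1}(C)=n$, then because $r+1\le k-1<k$, Lemma~\ref{monotonicity} gives $d_H^{r+1}(C)<n$, so some dimension-$(r+1)$ subspace $E$ has $w_H(E)<n$; as $w_p(E)\le n=d_p^{r+1}(C)$ this $E$ also realizes the minimum, and I replace $D$ by $E$. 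Granting $w_H(D)<n$, the set $\chi_H(D)$ is a proper nonempty subset of $\mathbb Z/n\mathbb Z$, so it has a right endpoint: an index $j$ with $j\in\chi_H(D)$ but $j+1\notin\chi_H(D)$. Set $D'=\{\mathbf x\in D\mid x_j=0\}$; since $j\in\chi_H(D)$ the functional $\mathbf x\mapsto x_j$ is nonzero on $D$, so $\dim D'=r$. Then $j\notin\chi_H(D')$ by construction and $j+1\notin\chi_H(D')\subseteq\chi_H(D)$, so the characterization above gives $j\notin\chi_p(D')$, while $j\in\chi_p(D)$. Hence $\chi_p(D')\subsetneq\chi_p(D)$, and $d_p^{r}(C)\le w_p(D')\le w_p(D)-1=d_p^{r+1}(C)-1$.

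The one genuinely delicate point — and the reason the last inequality is only weak — is this reduction to $w_H(D)<n$: it uses $r+1<k$ through Lemma~\ref{monotonicity} and breaks at $r+1=k$, where $C$ itself may satisfy $w_H(C)=n$, forcing $d_p^{k}(C)=n$ with no coordinate left to delete, so that $d_p^{k-1}(C)=n$ can indeed occur. I expect the remaining care to be purely bookkeeping: handling the wrap-around cases modulo $n$ when locating the right endpoint $j$, and checking that deleting the single coordinate $x_j$ really evicts position $j$ from $\chi_p$. Everything else reduces to dimension counting for the kernel of a coordinate functional.
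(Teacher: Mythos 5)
Your proposal is correct and follows essentially the same route as the paper's proof: the lower and upper bounds and the weak inequality are handled by the same elementary observations, and the strict inequalities use the identical key device — invoking Lemma~\ref{monotonicity} to reduce to a minimizing subspace $D$ with $w_H(D)<n$, picking an index $j\in\chi_H(D)$ with $j+1\notin\chi_H(D)$, and cutting with the coordinate functional $x\mapsto x_j$ to drop position $j$ from the pair support. The only difference is expository (you isolate the identity $\chi_p(D)=\chi_H(D)\cup(\chi_H(D)-1)$ and spell out the dimension count), not mathematical.
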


\begin{proof}
The inequlaity $d_{p}^{\,r}(C)\leq d_{p}^{\,r+1}(C)$ is trivial for $1\leq r \leq k-1$. For any subspace $D$ of dimension one of $C$ over $\mathbb{F}_q$, there exists $0\neq \mathbf{x}=(x_{0},\cdots,x_{n-1})\in D$ such that $x_{i}\neq 0$. Hence $w_{p}(D)\geq 2$ and $d_{p}^{\,1}(C)\geq2$.

For any $2\leq r \leq k-1$, by Lemma~\ref{monotonicity}, we have $d_{H}^{\,r}(C)<n$. Note that there exists a subspace $D$ of $C$ such that $\dim(D)=r$ and $w_p(D)=d_{p}^{\,r}(C)$ by the definition of the $r$-minimal pair weight of $C$. If $w_H(D)=|\chi_H(D)|=n$, then $$d_{p}^{\,r}(C)=w_p(D)=|\chi_{p}(D)|=n.$$ There exists an $\mathbb{F}_q$-subspace $\tilde{D}$ of $C$ such that $\dim(\tilde{D})=r$ and $w_{H}(\tilde{D})=d_{H}^{\,r}(C)<n$ by Lemma~\ref{monotonicity} again. Then $n=w_p(D)=d_{p}^{\,r}(C)\leq w_p(\tilde{D})$ and hence $w_p(D)=d_{p}^{\,r}(C)=w_p(\tilde{D})$. Therefore, without loss of generality, we can assume $w_H(D)<n$. Then there exists an index $i\in \chi_H(D)$ such that $i+1\not \in\chi_H(D)$, where $i+1$ is taken modulo $n$ when $i=n-1$. Let $\hat{D}=\{\mathbf{x}\in D\,|\,\mathbf{x}=(x_{0},\cdots,x_{n-1}),x_{i}=0\}$. We know that $\dim(\hat{D})=r-1$, $i\not\in \chi_H(\hat{D})$ and $\chi_H(\hat{D})\bigcup\{i\}=\chi_H(D)$. Hence $i\not\in \chi_{p}(\hat{D})$ and $i\in \chi_{p}(D)$. Therefore, $d_{p}^{\,r-1}(C)\leq|\chi_{p}(\hat{D})|<|\chi_{p}(D)|=d_{p}^{\,r}(C)$ for $2\leq r \leq k-1$.
\end{proof}

\begin{Remark}
 There exists a linear code $C$ of length $n$ such that $d_{p}^{\,k-1}(C)=d_{p}^{\,k}(C)=n$. For example, let $C$ be the linear code over $\mathbb{F}_2$ with generator matrix $\left(\begin{array}{ccc}
                       1 & 1&0 \\
                        0 &1&1

\end{array}\right)$. Then we have $d_{p}^{\,1}(C)=d_{p}^{\,2}(C)=3=n$.
\end{Remark}

\begin{Corollary}\label{equ condition}
Let $C$ be an $[n,k]$-linear code over $\mathbb{F}_q$ with $k\ge 2$. Then
\begin{description}
  \item[(a)] if $d_{p}^{\,k-1}(C)=d_{p}^{\,k}(C)$ then $d_{H}^{\,k}(C)=n$.
  \item[(b)] $d_{p}^{\,k-1}(C)=d_{p}^{\,k}(C)$ if and only if $\ d_{p}^{\,k-1}(C)=d_{p}^{\,k}(C)=n$.
\end{description}
\end{Corollary}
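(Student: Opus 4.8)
The plan is to derive part (a) by contraposition and then feed it into part (b). Both rest on a single observation: the strict-inequality argument in the proof of Theorem~\ref{pair monotonicity} used only the hypothesis $d_H^{\,r}(C)<n$, and therefore carries over verbatim to the index $r=k$ as soon as $d_H^{\,k}(C)<n$. So the whole corollary is really a statement about what happens at the top of the hierarchy, where this hypothesis may fail.

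For part (a), I would prove the contrapositive: assuming $d_H^{\,k}(C)<n$ (recall $d_H^{\,k}(C)\le n$ by Lemma~\ref{monotonicity}), I will show $d_p^{\,k-1}(C)<d_p^{\,k}(C)$, so that the equality $d_p^{\,k-1}(C)=d_p^{\,k}(C)$ forces $d_H^{\,k}(C)=n$. Since $C$ is the unique $k$-dimensional subspace of itself, $d_p^{\,k}(C)=w_p(C)$ and $d_H^{\,k}(C)=w_H(C)=|\chi_H(C)|<n$, so $\chi_H(C)$ is a proper nonempty subset of $\mathbb{Z}/n\mathbb{Z}$. Hence there is an index $i\in\chi_H(C)$ with $i+1\notin\chi_H(C)$, and I set $\hat D=\{\mathbf{x}=(x_0,\dots,x_{n-1})\in C\,|\,x_i=0\}$. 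Because the functional $\mathbf{x}\mapsto x_i$ is nonzero on $C$, we get $\dim(\hat D)=k-1$; moreover every $\mathbf{x}\in\hat D$ has $x_i=0$ and $x_{i+1}=0$ (the latter since $i+1\notin\chi_H(C)$), so $i\notin\chi_p(\hat D)$, while $i\in\chi_p(C)$. As $\hat D\subseteq C$ forces $\chi_p(\hat D)\subseteq\chi_p(C)$, this inclusion is strict, whence $d_p^{\,k-1}(C)\le w_p(\hat D)<w_p(C)=d_p^{\,k}(C)$, exactly as in Theorem~\ref{pair monotonicity}.

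For part (b), the ``if'' direction is immediate. For the converse, assume $d_p^{\,k-1}(C)=d_p^{\,k}(C)$; part (a) then yields $d_H^{\,k}(C)=n$, and it remains only to upgrade this to $d_p^{\,k}(C)=n$. Since $d_H^{\,k}(C)=w_H(C)=|\chi_H(C)|=n$ means $\chi_H(C)=\mathbb{Z}/n\mathbb{Z}$, and the inclusion $\chi_H(C)\subseteq\chi_p(C)$ always holds, we obtain $\chi_p(C)=\mathbb{Z}/n\mathbb{Z}$ and hence $d_p^{\,k}(C)=w_p(C)=n$ (this is also the content of part (b) of the first Theorem of this section). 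Combining with the hypothesis gives $d_p^{\,k-1}(C)=d_p^{\,k}(C)=n$, as desired.

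The only delicate point is the verification in part (a) that a boundary index $i$ exists and that $\hat D$ simultaneously annihilates positions $i$ and $i+1$ in the pair support; this is precisely the step that collapses at $r=k$ when $d_H^{\,k}(C)=n$, since then $\chi_H(C)$ is all of $\mathbb{Z}/n\mathbb{Z}$ and no boundary index is available. This is exactly why $r=k$ had to be excluded in Theorem~\ref{pair monotonicity}, and why $d_p^{\,k-1}(C)=d_p^{\,k}(C)$ can occur only in the degenerate case $d_H^{\,k}(C)=n$. Everything else is routine reuse of the monotonicity already established, together with the elementary implication $\chi_H(C)=\mathbb{Z}/n\mathbb{Z}\Rightarrow\chi_p(C)=\mathbb{Z}/n\mathbb{Z}$.
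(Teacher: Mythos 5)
Your proof is correct and follows essentially the same route as the paper: for (a) you build the same hyperplane subcode $\hat D=\{\mathbf{x}\in C\,|\,x_i=0\}$ at a boundary index $i\in\chi_H(C)$ with $i+1\notin\chi_H(C)$ and conclude $d_p^{\,k-1}(C)<d_p^{\,k}(C)$ (the paper phrases this as a contradiction rather than a contrapositive, which is the same argument), and for (b) you use (a) together with $\chi_H(C)=\mathbb{Z}/n\mathbb{Z}\Rightarrow\chi_p(C)=\mathbb{Z}/n\mathbb{Z}$ exactly as the paper does. Your added justifications (the rank of the coordinate functional, the inclusion $\chi_p(\hat D)\subseteq\chi_p(C)$) only make explicit what the paper leaves implicit.
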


\begin{proof}
(a)\,\, Suppose otherwise that $d_{H}^{\,k}(C)<n$. Then there exists an index $i\in \chi_H(C)$ such that $i+1\not \in\chi_H(C)$, where the indices are taken modulo $n$. Let
$$\hat{C}=\{\mathbf{x}\in C\,|\,\mathbf{x}=(x_{0},\cdots,x_{n-1}),x_{i}=0\}.$$ We know $\dim(\hat{C})=k-1$, $i\not\in \chi_H(\hat{C})$ and $\chi_H(\hat{C})\bigcup\{i\}=\chi_H(C)$. Hence $i\not\in \chi_{p}(\hat{C})$ and $i\in \chi_{p}(C)$. Therefore, we have $$d_{p}^{\,k-1}(C)\leq|\chi_{p}(\hat{C})|<|\chi_{p}(C)|=d_{p}^{\,k}(C),$$ which is a contradiction.

(b)\,\, We only need to prove the necessity. By (a), $d_{H}^{\,k}(C)=n=|\chi_H(C)|$, hence $d_{p}^{\,k}(C)=|\chi_{p}(C)|=n$. Therefore, $\ d_{p}^{\,k-1}(C)=d_{p}^{\,k}(C)=n$.
\end{proof}

The claim ``$d_{H}^{\,k}(C)=n$ implies $d_{p}^{\,k-1}(C)=d_{p}^{\,k}(C)$" is not true in general. For example, let $C$ be a $[4,2]$-linear code over $\mathbb{F}_2$ with the generator matrix $G=\left(\begin{array}{cccc}
                       1 & 1&0&0 \\
                        0&0 &1&1

\end{array}\right) $. Then $d_{H}^{\,2}(C)=4$, $d_{p}^{\,1}(C)=3$ and $d_{p}^{\,2}(C)=4$.

 By using Theorem~\ref{pair monotonicity}, we can give a bound for generalized pair weight hierarchies $\{d_p^{\,1}(C), d_p^{\,2}(C),\cdots, d_p^{\,k}(C)\}$ and a relationship between this bound and MPDS codes defined in Remark~\ref{2.8}. For two real number sequences $\{a_1,a_2,\cdots,a_k\}$ and $\{b_1,b_2,\cdots,b_k\}$, $$\{a_1,a_2,\cdots,a_k\}\leq \{b_1,b_2,\cdots,b_k\}$$ means $a_i\leq b_i$ for any $1\leq i\leq k$.

\begin{Theorem}[Singleton Bound respect to generalized pair weights]\label{bound}
Let $C$ be an $[n,k]$-linear code over $\mathbb{F}_q$. Then
$$\{d_p^{\,1}(C), d_p^{\,2}(C),\cdots,d_p^{\,k-1}(C), d_p^{\,k}(C)\}\leq \{n-k+2,n-k+3,\cdots,n,n\}.$$ These bounds are met with equality everywhere if and only if $C$ is an MPDS code.
  \end{Theorem}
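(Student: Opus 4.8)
The plan is to deduce everything from the strict monotonicity established in Theorem~\ref{pair monotonicity}, so that the real content of the proof is an elementary counting argument rather than any new structural input about pair supports.

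First I would establish the inequality of sequences. By Theorem~\ref{pair monotonicity} the values $d_p^{\,1}(C)<d_p^{\,2}(C)<\cdots<d_p^{\,k-1}(C)$ form a strictly increasing sequence of positive integers satisfying $d_p^{\,k-1}(C)\le d_p^{\,k}(C)\le n$. Reading this chain from the top downwards, each step decreases the value by at least one, so $d_p^{\,k-1-j}(C)\le n-j$ for $0\le j\le k-2$; rewriting with $r=k-1-j$ gives $d_p^{\,r}(C)\le n-k+1+r$ for every $1\le r\le k-1$. Since the $r$-th entry of the claimed bound tuple is exactly $n-k+1+r$ for $1\le r\le k-1$, and the coordinate $r=k$ is handled trivially by $d_p^{\,k}(C)=w_p(C)\le n$, this yields the stated inequality $\{d_p^{\,1}(C),\dots,d_p^{\,k}(C)\}\le\{n-k+2,\dots,n,n\}$.

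For the equality statement, the forward implication is immediate: if equality holds in every coordinate, then in particular $d_p^{\,1}(C)=n-k+2$, and since $d_p(C)=d_p^{\,1}(C)$ by Remark~\ref{2.8}, the code $C$ is MPDS. The reverse implication is where the counting does the work. Assuming $C$ is MPDS, i.e. $d_p^{\,1}(C)=n-k+2$, the strictly increasing integers $d_p^{\,1}(C)<\cdots<d_p^{\,k-1}(C)$ all lie in the integer interval $[n-k+2,\,n]$, whose left endpoint is $d_p^{\,1}(C)$ and which contains exactly $n-(n-k+2)+1=k-1$ integers. A strictly increasing sequence of length $k-1$ inside a set of $k-1$ integers must exhaust it, so $d_p^{\,r}(C)=n-k+1+r$ for all $1\le r\le k-1$; in particular $d_p^{\,k-1}(C)=n$. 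Finally $n=d_p^{\,k-1}(C)\le d_p^{\,k}(C)\le n$ forces $d_p^{\,k}(C)=n$, so every coordinate meets the bound.

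The only delicate point is the pigeonhole step, where one must verify that $[n-k+2,\,n]$ contains precisely $k-1$ integers so that a strictly increasing length-$(k-1)$ sequence starting at its left endpoint is forced to take every intermediate value. I do not expect a genuine obstacle here, as all the essential content — the strict inequalities between consecutive generalized pair weights — is already supplied by Theorem~\ref{pair monotonicity}, and the remainder is arithmetic. I would also remark in passing that the substance of the statement concerns $k\ge 2$ (so that $n-k+2\le n$ and the interval is nonempty); the case $k=1$ degenerates to the trivial bound $d_p^{\,1}(C)\le n$.
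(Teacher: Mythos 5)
Your proposal is correct and takes essentially the same approach as the paper's proof: the inequality is obtained, exactly as in the paper, by descending the strict chain of Theorem~\ref{pair monotonicity} to get $d_p^{\,r}(C)\le n-k+r+1$ for $1\le r\le k-1$, with the $r=k$ coordinate trivial. The paper dismisses the MPDS equivalence as ``obvious''; your pigeonhole argument filling in that step is the intended one and is sound (in the substantive case $k\ge 2$, as you note).
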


\begin{proof}
By Theorem~\ref{pair monotonicity}, for all $1\leq r\leq k-1$, we get $$d_{p}^{\,r}(C)\leq d_{p}^{\,r+1}(C)-1\leq\cdots \leq d_{p}^{\,k-1}(C)+r-k+1\leq n+r-k+1.$$ The remaining part of the proof is obvious.
\end{proof}

\begin{Example}
Let $C$ be the linear code with a generator matrix $\left(\begin{array}{cccc}
                       1 & 0&1&1 \\
                        0 &1&-1&1

\end{array}\right) $ over $\mathbb{F}_3$. Then we know that $C$ is an MPDS code by Proposition 4.1 of \cite{C}. On  the other hand, we have $d_p^{\,i}(C)=4$ for any $1\leq i \leq 2$ by directly calculating.

\end{Example}

By using Theorem~\ref{bound}, it is easy to know that a linear code is not an MPDS code when there exists an index  $r$ such that $1\leq r\leq k-1$ and $d_{p}^{\,r}(C)<n-k+r+1$. Then we prove that the definition of the LDP for the pair weight which is essentially the same as the generalized pair weight hierarchy.

\begin{Theorem}\label{essent}
Assume the notations are given above. Then $d_p^r(C)=m_r(C)$ for all $1\leq r \leq k$.
\end{Theorem}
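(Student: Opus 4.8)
The plan is to establish the two inequalities $d_{p}^{\,r}(C)\le m_r(C)$ and $m_r(C)\le d_{p}^{\,r}(C)$ separately. For the first, I would choose $J\subseteq\{0,1,\dots,n-1\}$ with $\dim(C_J)=r$ and $|J|=m_r(C)$, which exists by definition of $m_r(C)$. Directly from the definition of $C_J$, every $\mathbf{c}\in C_J$ satisfies $(c_i,c_{i+1})=(0,0)$ for each $i\notin J$, so $\chi_{p}(C_J)\subseteq J$ and hence $w_{p}(C_J)=|\chi_{p}(C_J)|\le|J|=m_r(C)$. Since $C_J$ is a subspace of $C$ of dimension exactly $r$, the definition of $d_{p}^{\,r}(C)$ gives $d_{p}^{\,r}(C)\le w_{p}(C_J)\le m_r(C)$.

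For the reverse inequality, I would pick a subspace $D\le C$ with $\dim(D)=r$ and $w_{p}(D)=d_{p}^{\,r}(C)$, and set $J=\chi_{p}(D)$, so that $|J|=d_{p}^{\,r}(C)$. By the definition of the pair support, $(c_i,c_{i+1})=(0,0)$ for every $\mathbf{c}\in D$ and every $i\notin\chi_{p}(D)=J$; thus $D\subseteq C_J$ and $\dim(C_J)\ge r$. It then remains to exhibit some $J'\subseteq J$ with $\dim(C_{J'})=r$ exactly, for then $m_r(C)\le|J'|\le|J|=d_{p}^{\,r}(C)$.

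The hard part will be precisely this last step: unlike the Hamming case, deleting one index $i$ from $J$ imposes the two constraints $c_i=0$ and $c_{i+1}=0$ on $C_J$, so $\dim(C_J)$ can drop by $2$ at once, and it is not a priori clear that the value $r$ is attained as $J$ shrinks. To control this, I would reformulate $C_J$ as the Hamming subcode supported on $T_J=\{m\,|\,m\in J,\ m-1\in J\}$, that is, $C_J=\{\mathbf{c}\in C\,|\,c_m=0\ \text{for all } m\notin T_J\}$, and view $J$ as a disjoint union of maximal cyclic runs of consecutive indices. Deleting an endpoint of a run of length at least $2$ removes exactly one element from $T_J$, hence imposes exactly one new coordinate constraint and lowers $\dim(C_J)$ by at most $1$, while deleting an isolated index of $J$ leaves $T_J$, and therefore $C_J$, unchanged.

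Starting from $J=\chi_{p}(D)$ and repeatedly deleting run-endpoints, $T_J$ shrinks by one element per step until it is empty, where $\dim(C_J)=0$; so the non-increasing integer sequence of dimensions moves from $\dim(C_J)\ge r$ down to $0$ in steps of size at most $1$ and must attain the value $r$. Taking $J'$ to be the index set at the first moment when $\dim(C_{J'})=r$ completes the argument, giving $m_r(C)\le|J'|\le|J|=d_{p}^{\,r}(C)$. The same reduction, applied instead to $J=\{0,1,\dots,n-1\}$ (where $C_J=C$ has dimension $k\ge r$), simultaneously shows that $m_r(C)$ is well defined for every $1\le r\le k$.
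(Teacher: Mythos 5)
Your overall frame---two inequalities, with a set $J_0$ realizing $m_r(C)$ for one direction, and $J=\chi_p(D)$ for an optimal $r$-dimensional subspace $D$ for the other---is the same as the paper's, and your first inequality is identical to the paper's Inequality (2.2). The genuine difference is how you handle the troublesome possibility $\dim(C_J)>r$: the paper disposes of it by invoking the strict monotonicity of generalized pair weights (Theorem~\ref{pair monotonicity}) together with Corollary~\ref{equ condition}, which forces $r=k-1$ and $d_{p}^{\,k-1}(C)=d_{p}^{\,k}(C)=n$, whereas you give a constructive shrinking argument. Your key identification $C_J=\{\mathbf{c}\in C \mid c_m=0\ \forall\, m\notin T_J\}$ with $T_J=\{m\in J \mid m-1\in J\}$ is correct, and your endpoint-deletion analysis (deleting an endpoint of a run of length at least $2$ removes exactly one element of $T_J$; deleting an isolated index removes none) is valid whenever $J$ is a \emph{proper} subset of $\mathbb{Z}/n\mathbb{Z}$, because only then does every run have endpoints. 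So your argument settles, more elementarily and self-containedly than the paper's, every case with $d_{p}^{\,r}(C)<n$, and it even establishes well-definedness of $m_r(C)$ in those cases, which the paper leaves implicit.

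The gap is the case $J=\chi_p(D)=\mathbb{Z}/n\mathbb{Z}$ with $r<k$ (by the paper's Theorem~\ref{pair monotonicity} this is exactly $r=k-1$ and $d_{p}^{\,k-1}(C)=n$), which is also the situation of your closing claim about $J=\{0,1,\dots,n-1\}$. The full index set is a single cyclic run with no endpoint: deleting any index $i$ removes the \emph{two} elements $i$ and $i+1$ from $T_J$, so the dimension can drop by $2$ and your discrete intermediate-value argument fails at the very first step. This is not a repairable technicality, because the value $r$ can genuinely be skipped: for the paper's own example following Theorem~\ref{pair monotonicity}, the code over $\mathbb{F}_2$ with generator matrix $\left(\begin{smallmatrix}1&1&0\\0&1&1\end{smallmatrix}\right)$, every nonzero codeword has exactly one zero coordinate, so for every proper subset $J$ the constraints $c_i=c_{i+1}=0$ (for $i\notin J$) annihilate all nonzero codewords; hence $\dim(C_J)$ takes only the values $0$ and $2$, there is no $J$ with $\dim(C_J)=1$, and $m_1(C)$ is a minimum over an empty set even though $d_{p}^{\,1}(C)=3$. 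So your final well-definedness claim is false as stated. In fairness, the paper's proof shares this blind spot---its second case silently uses ``$m_r(C)\leq n$,'' which presupposes a $J$ with $\dim(C_J)=r$---so the defect really lies in the theorem/definition (it disappears if $m_r(C)$ is read as a minimum over $\{J \mid \dim(C_J)\geq r\}$, under which both your argument and the paper's go through); but as written, your proof breaks precisely in this full-support case.
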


\begin{proof}
Assume $1\leq r\leq k$. There is a subset $J_0$  of $\{0,1,\cdots,n-1\}$ such that $\dim(C_{J_0})=r$ and $|J_0|=m_r(C)$ by the definition of $m_r(C)$. By the definition of $d_p^{\,r}(C)$, we have
\begin{equation}\label{2.2}
  d_p^{\,r}(C)\leq w_p(C_{J_0})\leq |J_0|=m_r(C).
\end{equation}

On the other hand, there is a subspace $D$ of $C$ such that $\dim(D)=r$ and $d_p^{\,r}(C)=w_p(D)$ by the definition of $d_p^{\,r}(C)$. Assume $J_1=\chi_p(D)$, then $D\leq C_{J_1}$.

If $D= C_{J_1}$, then $m_r(C)\leq |J_1|=w_p(D)=d_p^{\,r}(C)$. Hence $d_p^{\,r}(C)=m_r(C)$.

If $D\subsetneqq C_{J_1}$, we get $$\hat{r}=\dim(C_{J_1})>\dim(D)=r.$$ Hence $d_p^{\,\hat{r}}(C)\leq w_p(C_{J_1})\leq |J_{1}|=w_p(D)=d_p^{\,r}(C)$. By Theorem~\ref{pair monotonicity}, we get $r=k-1=\hat{r}-1$ and $d_p^{\,k-1}(C)=d_p^{\,k}(C)$. Then $d_{p}^{\,k-1}(C)=d_{p}^{\,k}(C)=n$ by Corollary~\ref{equ condition}. Hence $d_p^{\,r}(C)=m_r(C)$ by Inequality~\ref{2.2} and $m_r(C)\leq n$.
\end{proof}

\section{ Pair $r$-equiweight codes}

In this section, we study pair $r$-equiweight codes. Before we provide our main theorems in this section, we give some notions and a key lemma.

Recall that $n_{r, k}$ is the number of all subspaces of dimension $r$ of a vector space of dimension $k$.
Let ${\rm PG}^{r}(\mathbb{F}_q^{k})=\{V_{1}^r,V_{2}^r,\cdots,V_{n_{r,k}}^r \}$ be the set of all subspaces of dimension $r$ of $\mathbb{F}_q^k$. There is a bijection between ${\rm PG}^{k-r}(\mathbb{F}_q^{k})$ and ${\rm PG}^{r}(\mathbb{F}_q^{k})$, which is defined by
$$
{\rm PG}^{k-r}(\mathbb{F}_q^{k})\to {\rm PG}^{r}(\mathbb{F}_q^{k}), V^{k-r}\mapsto(V^{k-r})^\bot,\,\, \forall\,\,  V^{k-r}\in {\rm PG}^{k-r}(\mathbb{F}_q^{k}).
$$
Hence $n_{r, k}=n_{k-r, k}$. For convenience, if $\frac{k}{2}<r\leq k$, we assume $${\rm PG}^{r}(\mathbb{F}_q^{k})=\{V_{1}^{r}=(V_{1}^{k-r})^\bot,V_{2}^{r}=(V_{2}^{k-r})^\bot,\cdots,V_{n_{r, k}}^{r}=(V_{n_{r, k}}^{k-r})^\bot \}.
$$

Let $\mathbb{Q}$ be the rational number field. For $0\leq r \leq s\leq k$, let $T_{r,s}$ be a matrix in $M_{n_{r,k}\times n_{s,k}}(\mathbb{Q})$ such that
$$T_{r,s}=(t_{ij})_{n_{r,k}\times n_{s,k}},\,\,\,\,\, \mbox{where}\,\, t_{ij} = \left\{ \begin{array}{ll}
1,  & \textrm{if $V_i^r\subseteq V_j^s ;$}\\
0,  & \textrm{if $V_i^r\nsubseteq V_j^s .$}
\end{array} \right.$$
Let $A^T$ denote the transpose matrix of the matrix $A$. Let $J_{m\times n}$ be the $m\times n$ matrix with all entries being $1$. i.e, $J_{m\times n}=\left(\begin{array}{cccc}
                       1 & \cdots&1\\
                       \vdots&\ddots&\vdots\\
                        1&\cdots&1
\end{array}\right) $. In particular, $J_{1\times n}={\bf 1}=(1,\cdots, 1)$.

\begin{lem} \label{T1}
Assume the notations are given above, and let $k\ge 2$. Then
\begin{description}
  \item[(a)] The sum of all rows of $T_{r,s}$ is the constant row vector $n_{r,s}{\bf 1}$.

  \item[(b)] The matrix $T_{1,k-1}$ is an invertible matrix and $T_{1,k-1}^{-1}=\frac{1}{q^{k-2}}(T_{1,k-1}-\frac{q^{k-2}-1}{q^{k-1}-1}J_{n_{1,k}\times n_{1,k}})$. The sum of all rows of $T_{1,k-1}^{-1}$ is a constant row vector.

  \item[(c)] $T_{r,k-1}T_{1,k-1}=(q^{k-r-1})T_{1,r}^{T}+\frac{q^{k-r-1}-1}{q-1}J_{n_{r,k}\times n_{1,k}}$ and $$T_{r,k-1}T_{1,k-1}^{-1}=\frac{1}{q^{r-1}}T_{1,r}^{T}-\frac{q^{r-1}-1}{q^{r-1}(q^{k-1}-1)}J_{n_{r, k}\times n_{1, k}},\,\,  \mbox{for}\,\, k\ge r+1.$$

  \item[(d)] $T_{r,s}T_{s,z}=n_{s-r,z-r}T_{r,z}$ for $1\leq r\leq s\leq z\leq k$.
\end{description}
\end{lem}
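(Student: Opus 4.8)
The plan is to prove every identity by reducing the matrix product to an entrywise count of subspaces of $\mathbb{F}_q^k$. I would dispose of (a) and (d) first, as they are pure incidence counts. For (a), the $j$-th component of the sum of all rows of $T_{r,s}$ is $\sum_i t_{ij} = |\{i : V_i^r \subseteq V_j^s\}|$, the number of $r$-subspaces contained in the fixed $s$-subspace $V_j^s$; this equals $n_{r,s}$ independently of $j$, so the sum is $n_{r,s}\mathbf{1}$. For (d), the $(i,j)$ entry of $T_{r,s}T_{s,z}$ is $|\{l : V_i^r \subseteq V_l^s \subseteq V_j^z\}|$, the number of $s$-subspaces lying between $V_i^r$ and $V_j^z$; this vanishes unless $V_i^r \subseteq V_j^z$, in which case it is the number of $(s-r)$-subspaces of the $(z-r)$-dimensional quotient $V_j^z/V_i^r$, namely $n_{s-r,z-r}$. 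Since $(T_{r,z})_{ij}$ is exactly the indicator of $V_i^r \subseteq V_j^z$, this gives $T_{r,s}T_{s,z} = n_{s-r,z-r}T_{r,z}$.

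The key structural fact behind (b) and (c) is that, under the duality indexing $V_j^{k-1} = (V_j^1)^\bot$, we have $(T_{1,k-1})_{ij} = 1$ iff $V_i^1 \subseteq (V_j^1)^\bot$, i.e.\ iff $V_i^1 \perp V_j^1$; this condition is symmetric, so $T_{1,k-1}$ is a symmetric matrix. For (b) I would compute $T_{1,k-1}^2 = T_{1,k-1}T_{1,k-1}^T$ entrywise: its $(i,i')$ entry counts hyperplanes containing both $V_i^1$ and $V_{i'}^1$, which is $\frac{q^{k-1}-1}{q-1}$ on the diagonal and $\frac{q^{k-2}-1}{q-1}$ off it, whence $T_{1,k-1}^2 = q^{k-2}I + \frac{q^{k-2}-1}{q-1}J$. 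Together with the row-sum identity $T_{1,k-1}J = \frac{q^{k-1}-1}{q-1}J$, multiplying out the proposed inverse confirms $T_{1,k-1}T_{1,k-1}^{-1} = I$. The constancy of the row sums of $T_{1,k-1}^{-1}$ then follows by left-multiplying by $\mathbf{1}$ and applying (a), since $\mathbf{1}\,T_{1,k-1} = n_{1,k-1}\mathbf{1}$ and $\mathbf{1}J$ is a scalar multiple of $\mathbf{1}$.

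For (c) I would compute $(T_{r,k-1}T_{1,k-1})_{ij}$ as the number of lines $V_l^1$ with $V_l^1 \subseteq (V_i^r)^\bot$ and $V_l^1 \subseteq (V_j^1)^\bot$, i.e.\ lines inside $(V_i^r)^\bot \cap (V_j^1)^\bot = \langle V_i^r, V_j^1\rangle^\bot$. This complement has dimension $k-r$ if $V_j^1 \subseteq V_i^r$ and dimension $k-r-1$ otherwise, so the entry is $\frac{q^{k-r}-1}{q-1}$ or $\frac{q^{k-r-1}-1}{q-1}$ accordingly; comparing with the indicator $(T_{1,r}^T)_{ij}$ of $V_j^1 \subseteq V_i^r$ yields $T_{r,k-1}T_{1,k-1} = q^{k-r-1}T_{1,r}^T + \frac{q^{k-r-1}-1}{q-1}J$. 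The second identity follows by substituting the inverse from (b), using the row-sum relation $T_{r,k-1}J = \frac{q^{k-r}-1}{q-1}J$, and simplifying the resulting coefficient of $J$.

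I expect the main obstacle to be the bookkeeping in (b) and (c): one must keep straight which index labels a line and which labels a hyperplane under the duality convention, and correctly read off the dimensions of the intersections and orthogonal complements that control each count. The underlying counts (hyperplanes through a fixed subspace, subspaces between two nested spaces) are standard Gaussian-binomial identities, and the final confirmation of the inverse in (b) and of the $J$-coefficient in (c) reduce to routine manipulation of powers of $q$.
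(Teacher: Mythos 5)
Your proposal is correct and follows essentially the same route as the paper: entrywise incidence counts for (a) and (d), symmetry of $T_{1,k-1}$ under the duality indexing plus the computation $T_{1,k-1}^2=q^{k-2}I+\frac{q^{k-2}-1}{q-1}J$ for (b), and the two-case count (according to whether $V_j^1\subseteq V_i^r$) followed by substitution of the inverse for (c). The only cosmetic difference is that you count lines inside $\langle V_i^r,V_j^1\rangle^{\bot}$ where the paper counts hyperplanes containing $\langle V_i^r,V_j^1\rangle$, which are the same numbers by duality.
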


\begin{proof}
(a) Since the number of all subspaces of dimension $r$ of $V_{i}^s$ is $n_{r, s}$ for any $1\leq i\leq n_{s,k}$, we know that the sum of the rows of $T_{r,s}$ is the constant row vector $n_{r,s}{\bf 1}$.

(b) By (a), we know $J_{n_{1, k}\times n_{1, k} }T_{1,k-1}=n_{1,k-1}J_{n_{1, k}\times n_{1, k} }$. Since $V_{i}^{k-1}=(V_{i}^1)^\bot$ for any $1\leq i\leq n_{1,k-1}$, we get $T_{1,k-1}=T_{1,k-1}^{T}$. Then $$T_{1,k-1}T_{1,k-1}=T_{1,k-1}^{T}T_{1,k-1}=(b_{ij})_{n_{1,k}\times n_{1,k}},\,\,\,b_{ij} = \left\{ \begin{array}{ll}
n_{1,k-1},  & \textrm{if $i=j ;$}\\
n_{1,k-2},  & \textrm{if $i\neq j .$}
\end{array} \right.  , $$ since $b_{ij}$ is the number of all subspace of dimension one of $V^{k-1}_{i}\bigcap V^{k-1}_{j}$ for $1\leq i,j\leq n_{1, k}$. Then $$\frac{1}{n_{1,k-1}-n_{1,k-2}}(T_{1,k-1}-\frac{n_{1,k-2}}{n_{1,k-1}}J_{n_{1, k}\times n_{1, k} })T_{1,k-1}$$ is identity matrix. Hence $T_{1,k-1}$ is an invertible matrix and
\begin{align*}
  T_{1,k-1}^{-1} & =\frac{1}{n_{1,k-1}-n_{1,k-2}}(T_{1,k-1}-\frac{n_{1,k-2}}{n_{1,k-1}}J_{n_{1, k}\times n_{1, k} }) \\
   & =\frac{1}{q^{k-2}}(T_{1,k-1}-\frac{q^{k-2}-1}{q^{k-1}-1}J_{n_{1,k}\times n_{1,k}}).
\end{align*}

Since the sum of all rows of $T_{1,k-1}$ and the sum of all rows of $J_{n_{1,k}\times n_{1,k}}$ are constant row vectors,  the sum of all rows of $T_{1,k-1}^{-1}$ is a constant row vector.

(c) By the definition of $T_{1,k-1}$ and $T_{r,k-1}$, we have $$T_{r,k-1}T_{1,k-1}=T_{r,k-1}T_{1,k-1}^{T}=(c_{ij})_{n_{r,k}\times n_{1,k}}$$
such that $$c_{ij}=\big{|}\{V^{k-1}_{s}|1\leq s \leq n_{1,k},\langle V^1_{j}, V^r_{i}\rangle\leq V^{k-1}_{s} \}\big{|}$$ for $1\leq i \leq n_{r,k}$ and $1\leq j \leq n_{1, k}$. If $V^1_{j}\leq V^r_{i}$,
      \begin{align*}
        c_{ij} & =\big{|}\{V^{k-1}_{s}|1\leq s \leq n_{1, k},V^r_{i}\leq V^{k-1}_{s} \}\big{|} \\
         & =\big{|}\{ M \,|\,M\leq \mathbb{F}_{q}^{k}/ V^{r}_{i},\,\dim(M)=k-r-1  \}\big{|}  \\
         &=n_{1,k-r}.
      \end{align*}
If $V^1_{j}\nsubseteq V^r_{i}$,
       \begin{align*}
        c_{ij} & =\big{|}\{V^{k-1}_{s}\,|\,1\leq s \leq n_{1, k},\langle V^1_{j}, V^r_{i}\rangle\subseteq V^{k-1}_{s} \}\big{|}\\
         & =\big{|}\{ M \,|\,M\leq \mathbb{F}_{q}^{k}/ \langle V^1_{j}, V^r_{i}\rangle,\,\dim(M)=k-r-2  \}\big{|}  \\
         &=n_{1,k-r-1}.
      \end{align*}

Hence $c_{ij} = \left\{ \begin{array}{ll}
n_{1,k-r},  & \textrm{if $V^1_{j}\subseteq V^r_{i} ;$}\\
n_{1,k-r-1},  & \textrm{if $V^1_{j}\nsubseteq V^r_{i} .$}
\end{array} \right.$ And

\begin{align*}
 T_{r,k-1}T_{1,k-1} & =(n_{1,k-r}-n_{1,k-r-1})T_{1,r}^{T}+n_{1,k-r-1}J_{n_{r, k}\times n_{1, k}} \\
   & =q^{k-r-1}T_{1,r}^{T}+\frac{q^{k-r-1}-1}{q-1}J_{n_{r, k}\times n_{1, k}}.
\end{align*}

Since $T_{r,k-1}J_{n_{1, k}\times n_{1, k}}=n_{1,k-r}J_{n_{r, k}\times n_{1, k}}$, we have
\begin{align*}
  T_{r,k-1}T_{1,k-1}^{-1} & =\frac{1}{n_{1,k-1}-n_{1,k-2}}T_{r,k-1}(T_{1,k-1}-\frac{n_{1,k-2}}{n_{1,k-1}}J_{n_{1, k}\times n_{1, k}}) \\
   & =\frac{1}{q^{r-1}}T_{1,r}^{T}-\frac{q^{r-1}-1}{q^{r-1}(q^{k-1}-1)}J_{n_{r, k}\times n_{1, k}}.
\end{align*}

(d)  By the definition of $T_{r,s}$ and $T_{s,z}$, we have $T_{r,s}T_{s,z}=(d_{ij})_{n_{r,k}\times n_{z,k}}$ such that $$d_{ij}=|\{V^s_{l}\,|\,1\leq l \leq n_{s,k},\,V^r_{i}\leq V^s_{l}\leq V^z_{j}\}|$$ for $1\leq i \leq n_{r,k}$ and $1\leq j \leq n_{z,k}$. If $V^r_{i}\subseteq V^z_{j}$,
      \begin{align*}
        d_{ij} & =\big{|}\{V^s_l\,|\,1\leq l \leq n_{s,k},\, V^r_{i}\leq U\leq V^z_{j}\}\big{|} \\
         & =\big{|}\{U\,|\, U\leq V^z_{j}/V^r_{i},\,\dim(U)=s-r \}\big{|} \\
         &=n_{s-r,z-r}.
      \end{align*}
       If $V^r_{i}\nsubseteq V^z_{j}$, $d_{ij}=0$. Hence $T_{r,s}T_{s,z}=(d_{ij})_{n_{r,k}\times n_{z,k}}=n_{s-r,z-r}T_{r,z}$.
\end{proof}

It is easy to see that when $k=1$, any $[n,1]$-linear code is a pair equiweight code. In the following we assume $k\ge 2$, and study pair equiweight linear codes.

\begin{Theorem}\label{2 condition}
Assume the notations are given above. Let $C$ be an $[n,k]$-linear code over $\mathbb{F}_q$ with a generator matrix $G=(G_{0},\cdots,G_{n-1})$ for $k\ge 2$. Then $C$ is a pair equiweight code if and only if $\sum\limits_{V\in \Omega_i  }\frac{1}{|V|}m_G(V)$ is constant for any $1\leq i \leq n_{1,k}$, where $s=\min\{2,k-1\}$ and $\Omega_i=\{V\in {\rm PG}^{\leq s}(\mathbb{F}_q^k)\,|\,V^1_i\subseteq V\}$.
\end{Theorem}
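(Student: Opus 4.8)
The plan is to reduce the pair-equiweight property to the constancy of $\theta_{G}$ on hyperplanes, then to encode this condition and the asserted condition as vector identities built from the incidence matrices $T_{r,s}$, so that the equivalence drops out of the invertibility of $T_{1,k-1}$ recorded in Lemma~\ref{T1}(b). Throughout I take $k\ge 3$, so that $s=2$, and treat $k=2$ separately at the end.

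First I would use Proposition~\ref{r pair weight}: for $\mathbf{c}=\mathbf{y}G$ we have $w_{p}(\mathbf{c})=n-\theta_{G}(\langle\mathbf{y}\rangle^{\perp})$. Since $w_{p}$ is invariant under nonzero scaling it depends only on the line $\langle\mathbf{y}\rangle\in{\rm PG}^{1}(\mathbb{F}_q^{k})$, and since $\langle\mathbf{y}\rangle\mapsto\langle\mathbf{y}\rangle^{\perp}$ is a bijection onto ${\rm PG}^{k-1}(\mathbb{F}_q^{k})$, the code $C$ is pair equiweight if and only if $\theta_{G}(H)$ is constant as $H$ runs over all hyperplanes. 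Writing hyperplanes as $V_{j}^{k-1}=(V_{j}^{1})^{\perp}$, this says the vector $\Theta=(\theta_{G}(V_{j}^{k-1}))_{j}$ is a scalar multiple of $\mathbf{1}$. Next I would record both conditions in matrix form. Grouping the numbers $m_{G}(V)$ for $V\in{\rm PG}^{\le 2}(\mathbb{F}_q^{k})$ by dimension into the scalar $m_{G}(\{\mathbf{0}\})$ and the column vectors $\mathbf{x}^{(1)}=(m_{G}(V_{i}^{1}))_{i}$ and $\mathbf{x}^{(2)}=(m_{G}(V_{i}^{2}))_{i}$, counting which subspaces of dimension $\le 2$ lie in a hyperplane gives $\Theta=m_{G}(\{\mathbf{0}\})\mathbf{1}+T_{1,k-1}^{T}\mathbf{x}^{(1)}+T_{2,k-1}^{T}\mathbf{x}^{(2)}$; meanwhile, since the only subspaces of dimension $\le 2$ containing the line $V_{i}^{1}$ are that line (of size $q$) and the planes above it (of size $q^{2}$), the asserted quantity is $\Psi:=\big(\sum_{V\in\Omega_{i}}\tfrac{1}{|V|}m_{G}(V)\big)_{i}=\tfrac{1}{q}\mathbf{x}^{(1)}+\tfrac{1}{q^{2}}T_{1,2}\mathbf{x}^{(2)}$, and the theorem's condition is exactly that $\Psi$ is a scalar multiple of $\mathbf{1}$.

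The crux is to link $\Theta$ and $\Psi$ by a single invertible relation. Using $T_{1,k-1}^{T}=T_{1,k-1}$ from Lemma~\ref{T1}(b) and transposing the identity $T_{2,k-1}T_{1,k-1}^{-1}=\tfrac{1}{q}T_{1,2}^{T}-\tfrac{q-1}{q(q^{k-1}-1)}J$ of Lemma~\ref{T1}(c) (legitimate precisely because $k\ge r+1=3$), I would rewrite $T_{2,k-1}^{T}=T_{1,k-1}\big(\tfrac{1}{q}T_{1,2}-\tfrac{q-1}{q(q^{k-1}-1)}J\big)$. Substituting this into the expression for $\Theta$, noting that $J\mathbf{x}^{(2)}$ is a constant vector and that $T_{1,k-1}\mathbf{1}=n_{1,k-1}\mathbf{1}$ (by Lemma~\ref{T1}(a) together with the symmetry of $T_{1,k-1}$), the $J$-term collapses to a scalar multiple of $\mathbf{1}$, and the surviving terms assemble into $T_{1,k-1}\big(\mathbf{x}^{(1)}+\tfrac{1}{q}T_{1,2}\mathbf{x}^{(2)}\big)=q\,T_{1,k-1}\Psi$. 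This produces the key identity
\[
\Theta=q\,T_{1,k-1}\Psi+c\,\mathbf{1}
\]
for an explicit constant $c$. I expect this transpose-and-substitute bookkeeping — correctly matching the $J$-contributions and pinning down $c$ — to be the main obstacle.

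Finally the equivalence follows formally. If $\Psi=\lambda\mathbf{1}$, then $\Theta=(q\lambda n_{1,k-1}+c)\mathbf{1}$ is constant. Conversely, if $\Theta$ is constant then $T_{1,k-1}\Psi$ is constant, and since $T_{1,k-1}$ is invertible with $T_{1,k-1}^{-1}\mathbf{1}$ a scalar multiple of $\mathbf{1}$ (read off from the explicit inverse in Lemma~\ref{T1}(b)), $\Psi$ must be a scalar multiple of $\mathbf{1}$; this gives both directions. For the excluded case $k=2$, where $s=1$ and Lemma~\ref{T1}(c) does not apply, I would argue directly: here $\Omega_{i}=\{V_{i}^{1}\}$ and the hyperplanes are themselves the lines, so both sides reduce to the single statement that $m_{G}(V_{i}^{1})$ is independent of $i$, completing the proof.
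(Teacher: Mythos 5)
Your proposal is correct and takes essentially the same route as the paper's own proof: both reduce pair-equiweightness to the constancy of $\theta_{G}$ on hyperplanes via Proposition~\ref{r pair weight}, expand $\theta_{G}$ through the incidence matrices $T_{r,k-1}$, and use Lemma~\ref{T1}(a)--(c) to identify that constancy with the constancy of $\sum_{V\in \Omega_i}\frac{1}{|V|}m_G(V)$. Your key identity $\Theta=q\,T_{1,k-1}\Psi+c\,\mathbf{1}$ is just the transposed (column-vector) packaging of the paper's equations \eqref{aaaaa} and \eqref{abc}, the only organizational difference being that the paper handles $k=2$ uniformly through $s=\min\{2,k-1\}$ rather than as a separate case.
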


\begin{proof}
For $0\leq r \leq 2$, let $\Delta_{r}=(m_{G}(V^r_{1}),m_{G}(V^r_{2}),\cdots,m_{G}(V^r_{n_{r,k}}))$, and let $$\Gamma_{k-1}=(\theta_{G}(V^{k-1}_{1}),\theta_{G}(V^{k-1}_{2}),\cdots,\theta_{G}(V^{k-1}_{n_{1,k-1}})).$$ Assume $s=\min\{2,k-1\}$. By the definition of $\theta_{G}$, we can verify that
\begin{equation}\label{a}
  \Gamma_{k-1}=(\sum\limits_{W\in {\rm PG}^{\le 2}(V^{k-1}_{1})}m_G(W),\cdots,\sum\limits_{W\in {\rm PG}^{\le 2}(V^{k-1}_{n_{1,k-1}})}m_G(W))=\sum_{r=0}^{s}\Delta_{r}T_{r,k-1}.
\end{equation}
By Lemma~\ref{T1} (b), the above equation is
\begin{equation}\label{aaaaa}
  \Gamma_{k-1}-m_G(\textbf{0})\textbf{1}=(\sum_{r=1}^{s}\Delta_{r}T_{r,k-1}T_{1,k-1}^{-1})T_{1,k-1}.
\end{equation}
By Lemma~\ref{T1} (c), the element in the $i$th position of the vector $$\sum_{r=1}^{s}\Delta_{r}T_{r,k-1}T_{1,k-1}^{-1}=\sum_{r=1}^{s}\Delta_{r}(\frac{1}{q^{r-1}}T_{1,r}^{T}-\frac{q^{r-1}-1}{q^{r-1}(q^{k-1}-1)}J_{n_{r, k}\times n_{1, k}})$$
is
\begin{align}\label{abc}
   & m_G(V_i^1)+\sum_{r=2}^s(\frac{1}{q^{r-1}}\sum_{ V^r\in {\rm PG}^r(\mathbb{F}_q^k),V^1_i\subseteq V^r }m_G(V^r)-\frac{q^{r-1}-1}{q^{r-1}(q^{k-1}-1)}\sum_{V^r\in {\rm PG}^r(\mathbb{F}_q^k)}m_G(V^r)) \notag \\ \notag
   & =m_G(V_i^1)+\sum_{r=2}^s\sum_{V^r\in {\rm PG}^r(\mathbb{F}_q^k),V^1_i\subseteq V^r }\frac{1}{q^{r-1}}m_G(V^r)-\sum_{r=2}^s\sum_{V^r\in {\rm PG}^r(\mathbb{F}_q^k)}\frac{q^{r-1}-1}{q^{r-1}(q^{k-1}-1)}m_G(V^r)\\
   & =q\sum_{V\in \Omega_i  }\frac{1}{|V|}m_G(V)-\sum_{r=2}^s\sum_{V^r\in {\rm PG}^r(\mathbb{F}_q^k)}\frac{q^{r-1}-1}{q^{r-1}(q^{k-1}-1)}m_G(V^r),
\end{align}
where $\Omega_i=\{V\in {\rm PG}^{\leq s}(\mathbb{F}_q^k)\,|\,V^1_i\subseteq V\}$.

Now suppose $\sum\limits_{V\in \Omega_i  }\frac{1}{|V|}m_G(V)$ is constant for all $1\leq i\leq n_{1, k}$. Then $$q\sum_{V\in \Omega_i  }\frac{1}{|V|}m_G(V)-\sum_{r=2}^s\sum_{V^r\in {\rm PG}^r(\mathbb{F}_q^k)}\frac{q^{r-1}-1}{q^{r-1}(q^{k-1}-1)}m_G(V^r)$$ is constant for all $1\leq i\leq n_{1, k}$ and $\sum_{r=1}^{s}\Delta_{r}T_{r,k-1}T_{1,k-1}^{-1}$ is a constant vector by Equation~\ref{abc}. Since the sum of all rows of $T_{1,k-1}$ is a constant row vector by Lemma~\ref{T1} (a), we get $$\Gamma_{k-1}-m_G(\textbf{0})\textbf{1}$$ and $\Gamma_{k-1}$ are constant vectors by Equation~\ref{aaaaa}. Then $\theta_{G}$ is a constant function.

Since the $\mathbb{F}_q$-linear map $\phi:\,\mathbb{F}_q^{k}\rightarrow C$ such that $\phi(\mathbf{y})=\mathbf{y}G$ for any $\mathbf{y}\in \mathbb{F}_q^{k}$ is a linear isomorphism, there is nonzero vector $\mathbf{y}$ in $\mathbb{F}_q^{k}$ such that $\mathbf{c}=\phi(\mathbf{y} )$ for any nonzero codeword $\mathbf{c}$ in $C$. By Proposition~\ref{r pair weight}, we have $$w_{p}(\mathbf{c})= n-\theta_{G}(\langle\mathbf{y}\rangle^{\bot}).$$
Hence $C$ is a pair equiweight code.

On the contrary, suppose $C$ is a pair equiweight code. Then $\Gamma_{k-1}$, $\Gamma_{k-1}-m_G(\textbf{0})\textbf{1}$ and $\sum\limits_{r=1}^{s}\Delta_{r}T_{r,k-1}T_{1,k-1}^{-1}$ are all constant vectors by Proposition~\ref{r pair weight}, Lemma~\ref{T1} (b) and Equation~\ref{aaaaa}. Then $\sum\limits_{V\in \Omega_i  }\frac{1}{|V|}m_G(V)$ is constant for all $1\leq i\leq n_{1, k}$, since the element in the $i$th position of the vector $ \sum\limits_{r=1}^{s}\Delta_{r}T_{r,k-1}T_{1,k-1}^{-1}$ is $$q\sum_{V\in \Omega_i  }\frac{1}{|V|}m_G(V)-\sum_{r=2}^s\sum_{V^r\in {\rm PG}^r(\mathbb{F}_q^k)}\frac{q^{r-1}-1}{q^{r-1}(q^{k-1}-1)}m_G(V^r).$$
\end{proof}

%\begin{Example}
%Let $C$ be the linear code with generator matrix $\left(\begin{array}{ccccc}
                       %1 & 1&1&0&0\\
                        %0 &0&1&0&1

%\end{array}\right) $ over $\mathbb{F}_2$, then $C$ is a pair equiweight code and the value of the pair weight of $C$ is $4$.
%\end{Example}

In particular, if $k\ge 3$ and the function $m_{G}$ is constant restricted on ${\rm PG}^{2}(\mathbb{F}_q^{k})$, then we have the following corollary.

%\begin{Corollary}\label{construct2}
%Assume the notations given above. Let $C$ be an $[n,k]$-linear code over $\mathbb{F}_q$ with a generator matrix $G=(G_{0},\cdots,G_{n-1})$ and $2\leq k$. There is a linear code $\hat{C}$ over $\mathbb{F}_q$ with a generator matrix $\hat{G}=[G_{0},\cdots,G_{n-1},\alpha_{1},\alpha_{2},\cdots,\alpha_{s}]$ for some $\alpha_{i}$ in $\mathbb{F}_q^{k}$ as column vector for $1\leq i \leq s$ such that $\hat{C}$ is an equidistant pair weight code.
%\end{Corollary}

\begin{Corollary}\label{n dim}
Assume the notations are given above. Let $C$ be an $[n,k]$-linear code over $\mathbb{F}_q$ with a generator matrix $G=(G_{0},\cdots,G_{n-1})$. Then $C$ is a pair equiweight code if and only if the function $m_{G}$ restricted on ${\rm PG}^{1}(\mathbb{F}_q^{k})$ is a constant.
\end{Corollary}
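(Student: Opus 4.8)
The plan is to derive this directly from the characterization in Theorem~\ref{2 condition}. Since we are in the case $k\ge 3$, we have $s=\min\{2,k-1\}=2$, so for each $1\le i\le n_{1,k}$ the index set $\Omega_i=\{V\in{\rm PG}^{\le 2}(\mathbb{F}_q^k)\,|\,V_i^1\subseteq V\}$ consists of exactly the one-dimensional space $V_i^1$ itself together with all two-dimensional subspaces containing $V_i^1$ (the zero space is excluded, since $V_i^1\neq\{{\bf 0}\}$ forces $\dim V\ge 1$). First I would split the sum $\sum_{V\in\Omega_i}\frac{1}{|V|}m_G(V)$ accordingly into its one-dimensional term $\frac{1}{q}m_G(V_i^1)$ and its two-dimensional part $\frac{1}{q^2}\sum_{V_i^1\subseteq V,\,\dim V=2}m_G(V)$, using $|V_i^1|=q$ and $|V|=q^2$ for the cardinalities of the subspaces.

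Next I would invoke the standing hypothesis that $m_G$ is constant on ${\rm PG}^{2}(\mathbb{F}_q^k)$, say with value $a$. The number of two-dimensional subspaces containing a fixed line $V_i^1$ equals the number of lines in $\mathbb{F}_q^k/V_i^1\cong\mathbb{F}_q^{k-1}$, namely $n_{1,k-1}=\frac{q^{k-1}-1}{q-1}$, which does not depend on $i$. Hence the two-dimensional part equals $\frac{a}{q^2}\,n_{1,k-1}$, a constant independent of $i$, and therefore
$$
\sum_{V\in\Omega_i}\frac{1}{|V|}m_G(V)=\frac{1}{q}m_G(V_i^1)+\frac{a}{q^2}\cdot\frac{q^{k-1}-1}{q-1}.
$$
From this identity the quantity on the left is constant over all $1\le i\le n_{1,k}$ if and only if $m_G(V_i^1)$ is constant over all $i$, that is, if and only if $m_G$ restricted to ${\rm PG}^{1}(\mathbb{F}_q^k)$ is a constant.

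Combining this equivalence with Theorem~\ref{2 condition}, which states that $C$ is a pair equiweight code exactly when $\sum_{V\in\Omega_i}\frac{1}{|V|}m_G(V)$ is constant for all $i$, then yields the corollary. I do not expect any serious obstacle: the argument is essentially bookkeeping built on top of Theorem~\ref{2 condition}. The only point requiring care is the precise description of $\Omega_i$ and the weights $|V|$, together with the key observation that the number of two-dimensional overspaces of a line is the same value $n_{1,k-1}$ for every line. It is exactly this $i$-independence that causes the two-dimensional contribution to drop out as an additive constant, thereby isolating the one-dimensional data $m_G|_{{\rm PG}^{1}(\mathbb{F}_q^k)}$ as the sole quantity controlling the pair equiweight property.
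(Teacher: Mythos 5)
Your proof is correct and follows essentially the same route as the paper: both reduce the corollary to Theorem~\ref{2 condition} by splitting $\sum_{V\in\Omega_i}\frac{1}{|V|}m_G(V)$ into the line term plus a two-dimensional contribution that is constant in $i$ under the standing hypothesis that $m_G$ is constant on ${\rm PG}^{2}(\mathbb{F}_q^{k})$. In fact your version is slightly more careful than the paper's, which writes the line term as $m_G(V_i^1)$ rather than $\frac{1}{q}m_G(V_i^1)$ and leaves the count of two-dimensional overspaces as $|\Omega_i|-1$ where you identify it explicitly as $n_{1,k-1}$; neither discrepancy affects the conclusion.
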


\begin{proof}
Suppose the function $m_{G}$ is constant function on ${\rm PG}^{2}(\mathbb{F}_q^{k})$ with value $a\in\mathbb{N}$, then $$\sum_{V\in \Omega_i  }\frac{1}{|V|}m_G(V)=m_{G}(V^1_{i})+\frac{|\Omega_i|-1}{q^2}a,$$ where $\Omega_i=\{V\in {\rm PG}^{\leq 2}(\mathbb{F}_q^k)\,|\,V^1_i\subseteq V\}$. Hence the function $m_{G}$ for $G$ is a constant function on ${\rm PG}^{1}(\mathbb{F}_q^{k})$ if and only if $$\sum_{V\in \Omega_i  }\frac{1}{|V|}m_G(V)$$ is constant for all $1\leq i \leq n_{1, k}$, if and only if $C$ is a pair equiweight code by statement (b) in Theorem~\ref{2 condition}.
%Suppose the function $\hat{m}_{G}$ for $G$ is a constant function with value $b\in\mathbb{N}$, then
%$\hat{m}_{G}^{k-1}(\langle \mathbf{y}\rangle^{\bot})=\sum_{V\in PG^{1}(\langle \mathbf{y}\rangle^{\bot})}\hat{m}_{G}(V)=\frac{(q^{k-1}-1)}{(q-1)}b$ and $w_{p}(\mathbf{x})=n-\frac{(q^{k-1}-1)}{(q-1)}b-\frac{(q^{k-1}-1)(q^{k-1}-q)}{(q^{2}-1)(q^{2}-q)}a$. Hence $C$ is a equidistant pair weight code.
%Suppose $C$ is a equidistant pair weight code. Then $w_{p}(\mathbf{x})=n-\hat{m}_{G}^{k-1}(\langle \mathbf{y}\rangle^{\bot})-\frac{(q^{k-1}-1)(q^{k-1}-q)}{(q^{2}-1)(q^{2}-q)}a$ is invariant for any $0\neq\mathbf{x}\in C$ and the function $\hat{m}_{G}^{k-1}$ is constant a function from $PG^{k-1}(\mathbb{F}_q^{k})$ to $\mathbb{N}=\{0,1,2,\cdots\}$. By Lemma~\ref{r function}, the function $\hat{m}_{G}$ is a constant function.
\end{proof}

We can get the  following example by using Corollary~\ref{n dim}.
\begin{Example}
Let $\alpha_{1}=(011000001010001001011),\alpha_{2}=(001011000001011001010)$ and $\alpha_{3}=(000001011001001011001)\in\mathbb{F}_2^{21}$.
Let $C$ be the linear code with a generator matrix $\left(\begin{array}{c}
                       \alpha_{1}\\
                        \alpha_{2}\\
                        \alpha_{3}

\end{array}\right) $ over $\mathbb{F}_2$. Then $C$ is a pair equiweight code and the value of the pair weight of $C$ is $14$ by Corollary~\ref{n dim}. Also we can directly calculate to get the following table such that the first and the second column are non-zero vectors in $C$ and the third column is the pair weight of the vector which is at the same row.

$$\begin{tabular}{|c|c|c|}
  \hline
  % after \\: \hline or \cline{col1-col2} \cline{col3-col4} ...
  $\alpha_{1}$ & (011000001010001001011) & 14 \\
  \hline
  $\alpha_{2}$ &(001011000001011001010)& 14\\
  \hline
  $\alpha_{3}$ &(000001011001001011001)& 14\\
  \hline
  $\alpha_{1}+\alpha_{2}$ & (010011001011010000001) & 14 \\
  \hline
  $\alpha_{1}+\alpha_{3}$ & (011001010011000010010) & 14 \\
  \hline
  $\alpha_{2}+\alpha_{3}$ & (001010011000010010011) & 14 \\
  \hline
  $\alpha_{1}+\alpha_{2}+\alpha_{3}$ & (010010010010011011000) & 14 \\
  \hline
\end{tabular}$$

\end{Example}

In the  next theorem, we obtain a necessary condition and a sufficient condition of that $C$ is a pair $r$-equiweight code.
\begin{Theorem}\label{k-1 dim}
Assume the notations are given above. Let $C$ be an $[n,k]$-linear code over $\mathbb{F}_q$ with a generator matrix $G=(G_{0},\cdots,G_{n-1})$, $k\ge 2$ and $1\leq r \leq k-1$.
\begin{description}
\item[(a)] If $C$ is a pair $r$-equiweight code, then $$\sum_{V\in \Omega_i  }\frac{n_{k-r-\dim(V),k-1-\dim(V)}}{|V|}m_G(V)$$ is constant for any $1\leq i \leq n_{1,k}$, where $s=\min\{2,k-r\}$, $\Omega_i=\{V\in {\rm PG}^{\leq s}(\mathbb{F}_q^k)\,|\,V^1_i\subseteq V\}$.

\item[(b)] When $r=k-1$,  $C$ is a pair $r$-equiweight code if and only if the function $m_{G}$ restricted on ${\rm PG}^{1}(\mathbb{F}_q^{k})$ is a constant.

\item[(c)] When $2\leq r\leq k-2$, if $m_{G}(V^2_{i})+\frac{1}{n_{1,k-r-1}}\sum_{V^1\in {\rm PG}^{1}(V^2_{i})}m_{G}(V^1)$  is constant for $1\leq i \leq n_{2,k}$, then $C$ is a pair $r$-equiweight code.
\end{description}
\end{Theorem}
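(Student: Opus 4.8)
The plan rests on a single master equivalence that I would establish first. By Proposition~\ref{r pair weight}, $w_{p}(D)=n-\theta_{G}(\tilde{D}^{\bot})$ for every $r$-dimensional subspace $D$ of $C$; since $D\mapsto\tilde{D}$ is a bijection between the $r$-dimensional subspaces of $C$ and those of $\mathbb{F}_q^{k}$, and $\tilde{D}\mapsto\tilde{D}^{\bot}$ is a bijection onto ${\rm PG}^{k-r}(\mathbb{F}_q^{k})$, the code $C$ is a pair $r$-equiweight code if and only if $\theta_{G}$ is constant on ${\rm PG}^{k-r}(\mathbb{F}_q^{k})$. Recalling $\theta_{G}(U)=\sum_{V\in{\rm PG}^{\leq 2}(U)}m_{G}(V)$, everything reduces to controlling when this quantity is independent of the $(k-r)$-dimensional $U$. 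For (b), where $r=k-1$ and $k-r=1$, the only subspaces of dimension $\leq 2$ inside a line $V^1$ are $\{\mathbf{0}\}$ and $V^1$ itself, so $\theta_{G}(V^1)=m_{G}(\mathbf{0})+m_{G}(V^1)$; hence $\theta_{G}$ is constant on ${\rm PG}^{1}(\mathbb{F}_q^{k})$ if and only if $m_{G}$ is, giving both directions at once.

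For (c), with $2\leq r\leq k-2$ (so $s=2$ and $k-r\geq 2$), I would fix a $(k-r)$-dimensional $U$ and sum the hypothesis over all planes $V^2\subseteq U$. The key combinatorial fact is that $\sum_{V^2\subseteq U}\sum_{V^1\subseteq V^2}m_{G}(V^1)=n_{1,k-r-1}\sum_{V^1\subseteq U}m_{G}(V^1)$, because each line $V^1\subseteq U$ lies in exactly $n_{1,k-r-1}$ planes with $V^1\subseteq V^2\subseteq U$. The coefficient $\tfrac{1}{n_{1,k-r-1}}$ in the hypothesis is chosen precisely so that this factor cancels, leaving $\sum_{V^2\subseteq U}m_{G}(V^2)+\sum_{V^1\subseteq U}m_{G}(V^1)=a\,n_{2,k-r}$ for the assumed constant $a$. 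Adding $m_{G}(\mathbf{0})$ gives $\theta_{G}(U)=m_{G}(\mathbf{0})+a\,n_{2,k-r}$, independent of $U$, and the master equivalence finishes (c).

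For (a), the necessity, I would instead fix a line $V_i^1$ and sum $\theta_{G}(U)$ over all $(k-r)$-dimensional $U$ with $V_i^1\subseteq U$. If $C$ is a pair $r$-equiweight code this total equals $c\,n_{k-r-1,k-1}$, independent of $i$. Expanding $\theta_{G}$ and exchanging the order of summation, the coefficient of $m_{G}(W)$ is the number of $U\supseteq\langle V_i^1,W\rangle$ of dimension $k-r$, namely $n_{k-r-\dim W,k-\dim W}$ when $V_i^1\subseteq W$ and $n_{k-r-\dim W-1,k-\dim W-1}$ otherwise. Rewriting the sum over $W\not\supseteq V_i^1$ as the total over all $W$ minus the sum over $W\supseteq V_i^1$ peels off an $i$-independent constant, so that $\sum_{W\supseteq V_i^1}m_{G}(W)\bigl(n_{k-r-\dim W,k-\dim W}-n_{k-r-\dim W-1,k-\dim W-1}\bigr)$ is constant in $i$. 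The $q$-Pascal identity $n_{a,m}-n_{a-1,m-1}=q^{a}n_{a,m-1}$ collapses the bracket to $q^{k-r-\dim W}n_{k-r-\dim W,k-1-\dim W}$; the terms with $\dim W>k-r$ vanish since the corresponding Gaussian binomials are zero, so the sum runs exactly over $\Omega_i$. Dividing by the global constant $q^{k-r}$ and using $|W|=q^{\dim W}$ produces the stated combination and shows it is constant in $i$.

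The main obstacle is part (a): I must manage the double sum and the complementation carefully, verify the $q$-Pascal simplification, and confirm that the out-of-range Gaussian binomials vanish so that the sum restricts cleanly to $\Omega_i$ and the spurious factor $q^{k-r}$ can be absorbed into the constant; by contrast, (b) and (c) are short once the master equivalence is in hand. As an alternative for (a) that stays closer to the proof of Theorem~\ref{2 condition}, one may multiply the constant vector $\Gamma_{k-r}$ by $T_{k-r,k-1}T_{1,k-1}^{-1}$ and read off its $i$th entry via Lemma~\ref{T1}(c),(d); this recovers the same combination up to the same constant factor.
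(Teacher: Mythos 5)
Your proposal is correct, and it reaches the theorem by a genuinely different route than the paper. Both arguments begin with the same reduction (via Proposition~\ref{r pair weight}, $C$ is pair $r$-equiweight if and only if $\theta_G$ is constant on ${\rm PG}^{k-r}(\mathbb{F}_q^k)$), but from there the paper stays inside the matrix calculus of Lemma~\ref{T1}: for (a) it multiplies the constant vector $\Gamma_{k-r}-m_G(\mathbf{0})\mathbf{1}=\sum_{l=1}^{s}\Delta_l T_{l,k-r}$ on the right by $T_{k-r,k-1}T_{1,k-1}^{-1}$ and reads off the $i$th entry using Lemma~\ref{T1}(b),(c),(d), while for (c) it factors $\Delta_1T_{1,k-r}+\Delta_2T_{2,k-r}=\bigl(\tfrac{1}{n_{1,k-r-1}}\Delta_1T_{1,2}+\Delta_2\bigr)T_{2,k-r}$ via Lemma~\ref{T1}(d). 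You instead work combinatorially: in (a) you sum $\theta_G(U)$ over all $(k-r)$-dimensional $U\supseteq V_i^1$ (in matrix terms, multiplying by $T_{1,k-r}^{T}$ rather than by $T_{k-r,k-1}T_{1,k-1}^{-1}$), exchange the order of summation, and collapse the coefficient difference with the $q$-Pascal identity $n_{a,m}-n_{a-1,m-1}=q^{a}n_{a,m-1}$; in (c) your double count of flags $V^1\subseteq V^2\subseteq U$ is exactly the identity $T_{1,2}T_{2,k-r}=n_{1,k-r-1}T_{1,k-r}$ in disguise. Your route avoids the inverse matrix $T_{1,k-1}^{-1}$ entirely, at the price of importing the $q$-Pascal identity (not in Lemma~\ref{T1}, but standard and verifiable from the counting interpretation), and you correctly handle the boundary case $r=k-1$ by observing that the coefficients with $\dim W>k-r$ vanish, so the sum genuinely restricts to $\Omega_i$. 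A further small gain: your (b) proves both implications at once from $\theta_G(V^1)=m_G(\mathbf{0})+m_G(V^1)$, whereas the paper's proof of (b) (``Then we use (a)'') literally yields only necessity and leaves the sufficiency direction implicit. What the paper's approach buys is uniformity---the same Lemma~\ref{T1} machinery drives both Theorem~\ref{2 condition} and this theorem---while yours is more elementary and self-contained.
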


\begin{proof}
(a) Since the $\mathbb{F}_q$-linear map $\phi:\,\mathbb{F}_q^{k}\rightarrow C$ such that $\phi(\mathbf{y})=\mathbf{y}G$ for any $\mathbf{y}\in \mathbb{F}_q^{k}$ is a linear isomorphism, there is an unique $\mathbb{F}_q$-subspace $\tilde{D}$ of $\mathbb{F}_q^{k}$ such that $D=\phi(\tilde{D} )$ for any $\mathbb{F}_q$-subspace $D$ with $\dim(D)=r$ of $C$. By Proposition~\ref{r pair weight}, we have $$w_{p}(D)= n-\theta_{G}(\tilde{D}^{\bot}).$$
 Let $\Delta_{l}=(m_{G}(V^l_{1}),m_{G}(V^l_{2}),\cdots,m_{G}(V^l_{n_{l,k}}))$ for $0\leq l \leq s$ and  $$\Gamma_{k-r}=(\theta_{G}(V^{k-r}_{1}),\theta_{G}(V^{k-r}_{2}),\cdots,\theta_{G}(V^{k-r}_{n_{r,k-1}})).$$ By the definition of the function $\theta_{G}$, we get
\begin{equation}\label{a}
  \Gamma_{k-r}=\sum_{l=0}^{s}\Delta_{l}T_{l,k-r}
\end{equation}
and
$$\Gamma_{k-r}-m_G(\textbf{0})\textbf{1}=\sum_{l=1}^{s}\Delta_{l}T_{l,k-r}.$$
By Lemma~\ref{T1} (b) and (d), we have \begin{align}\label{asa1}
  (\Gamma_{k-r}-m_G(\textbf{0})\textbf{1})T_{k-r,k-1}T_{1,k-1}^{-1} & =\sum_{l=1}^{s}\Delta_{l}T_{l,k-r}T_{k-r,k-1}T_{1,k-1}^{-1} \notag\\
   & =\sum_{l=1}^{s}n_{k-l-r,k-l-1}\Delta_{l}T_{l,k-1}T_{1,k-1}^{-1}.
\end{align}
Also we know the element in the $i$th position of the vector $$\sum_{l=1}^{s}n_{k-l-r,k-l-1}\Delta_{l}T_{l,k-1}T_{1,k-1}^{-1}=\sum_{l=1}^{s}n_{k-l-r,k-l-1}\Delta_{l}(\frac{1}{q^{l-1}}T_{1,l}^{T}-\frac{q^{l-1}-1}{q^{l-1}(q^{k-1}-1)}J_{n_{l, k}\times n_{1, k}})$$ is

%$ n_{k-1-r,k-2}m_G(V_i^1)+\sum_{l=2}^s(\frac{n_{k-l-r,k-l-1}}{q^{l-1}}\sum_{ V^l\in PG^l(\mathbb{F}_q^k),V^1_i\subseteq V^l }m_G(V^l))-\sum_{l=2}^s(n_{k-l-r,k-l-1}\frac{q^{l-1}-1}{q^{l-1}(q^{k-1}-1)}\sum_{V^l\in PG^l(\mathbb{F}_q^k)}m_G(V^l)) $
    %$n_{k-1-r,k-2}m_G(V_i^1)+\sum_{l=2}^s\sum_{ V^l\in PG^l(\mathbb{F}_q^k),V^1_i\subseteq V^l }\frac{n_{k-l-r,k-l-1}}{q^{l-1}}m_G(V^l)-\sum_{l=2}^s(n_{k-l-r,k-l-1}\frac{q^{l-1}-1}{q^{l-1}(q^{k-1}-1)}\sum_{V^l\in PG^l(\mathbb{F}_q^k)}m_G(V^l))\\$
\begin{align}\label{aaaaaa1}
   q\sum_{V\in \Omega_i  }\frac{n_{k-r-\dim(V),k-1-\dim(V)}}{|V|}m_G(V)-\sum_{l=2}^s\sum_{V^l\in {\rm PG}^l(\mathbb{F}_q^k)}n_{k-l-r,k-l-1}\frac{q^{l-1}-1}{q^{l-1}(q^{k-1}-1)}m_G(V^l)
\end{align}
by Lemma~\ref{T1} (c), where $\Omega_i=\{V\in {\rm PG}^{\leq s}(\mathbb{F}_q^k)\,|\,V^1_i\subseteq V\}$.

Now suppose $C$ is a pair $r$-equiweight code. Then $\Gamma_{k-1}$, $\Gamma_{k-1}-m_G(\textbf{0})\textbf{1}$ and $$\sum_{l=1}^{s}n_{k-l-r,k-l-1}\Delta_{l}T_{l,k-1}T_{1,k-1}^{-1}$$ are both constant vectors by Proposition~\ref{r pair weight}, Lemma~\ref{T1} (a) and (b), and Equation~\ref{asa1}. Then $$\sum_{V\in \Omega_i  }\frac{n_{k-r-\dim(V),k-1-\dim(V)}}{|V|}m_G(V)$$ is constant for all $1\leq i\leq n_{1, k}$ by Equation~\ref{aaaaaa1}.

(b) When $r=k-1$, then  $s=1$ and $$\sum_{V\in \Omega_i  }\frac{n_{k-r-\dim(V),k-1-\dim(V)}}{|V|}m_G(V)=\frac{1}{q}m_G(V_i^1).$$
Then we use (a).

(c)When $2\leq r \leq k-2$, then  $s=2$ and Equation~\ref{a} is
\begin{equation}\label{a1}
  \Gamma_{k-r}-m_G(\textbf{0})\textbf{1}=\Delta_{1}T_{1,k-r}+\Delta_{2}T_{2,k-r}=(\frac{1}{n_{1,k-r-1}}\Delta_{1}T_{1,2}+\Delta_{2})T_{2,k-r}.
\end{equation}

Since the element in the $i$th position of the vector $\frac{1}{n_{1,k-r-1}}\Delta_{1}T_{1,2}+\Delta_{2}$ is $$m_{G}(V^2_{i})+\frac{1}{n_{1,k-r-1}}\sum_{V^1\in {\rm PG}^{1}(V^2_{i})}m_{G}(V^1)$$ which is constant for $1\leq i \leq n_{2,k}$ as assumption, we have $$\frac{1}{n_{1,k-r-1}}\Delta_{1}T_{1,2}+\Delta_{2},$$ $\Gamma_{k-r}-m_G(\textbf{0})\textbf{1}$ and $\Gamma_{k-r}$ are both constant vectors by Lemma~\ref{T1} (a). Hence $C$ is a pair $r$-equiweight code by Proposition~\ref{r pair weight}.

\end{proof}

\section{Linear isomorphisms preserving pair weights}

MacWilliams \cite{M} and later Bogart, Goldberg, and Gordon
\cite{BGG} proved that every linear isomorphism preserving Hamming weights between two linear codes over finite fields can be induced by a monomial matrix. Unfortunately,  a linear isomorphism induced by a permutation matrix may not preserve  pair weights of linear codes. In this section,  we  obtain a necessary and sufficient condition for a linear isomorphism preserving  pair weights between two linear codes.

Let $C$ and $\tilde{C}$ be two $[n,k]$-linear code over $\mathbb{F}_q$ and $G=\left(\begin{array}{c}
                       {\bf g}_{1}\\
                        \cdots \\
                        {\bf g}_{k}

\end{array}\right)$ be a generator matrix of $C$ for some ${\bf g}_{i}\in \mathbb{F}_q^{n}$. Let $\varphi$ be an $\mathbb{F}_q$-linear isomorphism from $C$ to $\tilde{C}$. Then $\tilde{G}=\left(\begin{array}{c}
                      \varphi( {\bf g}_{1})\\
                        \cdots \\
                        \varphi({\bf g}_{k})

\end{array}\right)$ is a generator matrix of $\tilde{C}$. Before we give a necessary and sufficient condition of that $w_p(\mathbf{c})=w_p(\varphi(\mathbf{c}))$ for any $\mathbf{c}\in C$, we need following theorem.

\begin{Theorem}\label{6.1}
Assume the notations are given above. Then $w_p(\mathbf{c})-w_p(\varphi(\mathbf{c}))$ is constant for any nonzero $\mathbf{c}\in C$ if and only if $\sum\limits_{V\in \Omega_i  }\frac{1}{|V|}(m_G(V)-m_{\tilde{G}}(V))$ is constant for any $1\leq i \leq n_{1,k}$, where $s=\min\{2,k-1\}$, $\Omega_i=\{V\in {\rm PG}^{\leq s}(\mathbb{F}_q^k)\,|\,V^1_i\subseteq V\}$.

\end{Theorem}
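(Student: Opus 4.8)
The plan is to reduce the statement to the ``difference'' version of Theorem~\ref{2 condition}, reusing the entire linear-algebraic machinery assembled there with $m_G$ replaced throughout by $m_G-m_{\tilde G}$.

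First I would exploit the compatibility of $\varphi$ with the two generator matrices. Since the rows of $\tilde G$ are $\varphi({\bf g}_{1}),\dots,\varphi({\bf g}_{k})$ and $\varphi$ is $\mathbb{F}_q$-linear, for every $\mathbf{y}=(y_1,\dots,y_k)\in\mathbb{F}_q^{k}$ we have
$$\varphi(\mathbf{y}G)=\varphi\Bigl(\sum_{i=1}^{k}y_i{\bf g}_{i}\Bigr)=\sum_{i=1}^{k}y_i\varphi({\bf g}_{i})=\mathbf{y}\tilde G.$$
Hence if $\mathbf{c}=\mathbf{y}G$ is a nonzero codeword of $C$, then $\varphi(\mathbf{c})=\mathbf{y}\tilde G$, and applying Proposition~\ref{r pair weight} to each of $G$ and $\tilde G$ yields
$$w_p(\mathbf{c})-w_p(\varphi(\mathbf{c}))=\theta_{\tilde G}(\langle\mathbf{y}\rangle^{\bot})-\theta_{G}(\langle\mathbf{y}\rangle^{\bot}).$$
As $\mathbf{c}$ ranges over the nonzero codewords of $C$, the line $\langle\mathbf{y}\rangle$ ranges over all of ${\rm PG}^{1}(\mathbb{F}_q^{k})$, so $\langle\mathbf{y}\rangle^{\bot}$ ranges over all of ${\rm PG}^{k-1}(\mathbb{F}_q^{k})$. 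Therefore $w_p(\mathbf{c})-w_p(\varphi(\mathbf{c}))$ is constant in $\mathbf{c}$ if and only if the function $\theta_{G}-\theta_{\tilde G}$ is constant on ${\rm PG}^{k-1}(\mathbb{F}_q^{k})$.

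Next I would transcribe the computation in the proof of Theorem~\ref{2 condition} for this difference. Writing $s=\min\{2,k-1\}$, set $\Delta_r=(m_G(V_1^r)-m_{\tilde G}(V_1^r),\dots,m_G(V_{n_{r,k}}^r)-m_{\tilde G}(V_{n_{r,k}}^r))$ for $0\le r\le s$ and $\Gamma_{k-1}=(\theta_G(V_1^{k-1})-\theta_{\tilde G}(V_1^{k-1}),\dots,\theta_G(V_{n_{1,k-1}}^{k-1})-\theta_{\tilde G}(V_{n_{1,k-1}}^{k-1}))$. By the definition of $\theta_G$ and $\theta_{\tilde G}$ one gets $\Gamma_{k-1}=\sum_{r=0}^{s}\Delta_r T_{r,k-1}$, and after peeling off the constant vector $(m_G(\mathbf{0})-m_{\tilde G}(\mathbf{0}))\mathbf{1}$ and multiplying by $T_{1,k-1}^{-1}$ (invertible by Lemma~\ref{T1}(b)), Lemma~\ref{T1}(c) shows that the $i$th entry of $\sum_{r=1}^{s}\Delta_r T_{r,k-1}T_{1,k-1}^{-1}$ equals
$$q\sum_{V\in\Omega_i}\frac{1}{|V|}\bigl(m_G(V)-m_{\tilde G}(V)\bigr)-\sum_{r=2}^{s}\sum_{V^r\in{\rm PG}^{r}(\mathbb{F}_q^{k})}\frac{q^{r-1}-1}{q^{r-1}(q^{k-1}-1)}\bigl(m_G(V^r)-m_{\tilde G}(V^r)\bigr),$$
exactly as in Equation~(\ref{abc}) but with $m_G$ replaced by $m_G-m_{\tilde G}$.

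Finally I would run the equivalence in both directions. The second double sum above is independent of $i$ (it runs over all of ${\rm PG}^{r}(\mathbb{F}_q^{k})$), so the $i$th entry is constant in $i$ precisely when $\sum_{V\in\Omega_i}\frac{1}{|V|}(m_G(V)-m_{\tilde G}(V))$ is; and since the row-sums of $T_{1,k-1}$ are constant by Lemma~\ref{T1}(a), this is in turn equivalent to $\Gamma_{k-1}-(m_G(\mathbf{0})-m_{\tilde G}(\mathbf{0}))\mathbf{1}$, hence $\Gamma_{k-1}$ itself, being a constant vector, i.e.\ to $\theta_G-\theta_{\tilde G}$ being constant on ${\rm PG}^{k-1}(\mathbb{F}_q^{k})$. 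Combined with the reduction of the first paragraph, this is the asserted equivalence. I expect the only delicate point to be the zero-dimensional term: the constants $m_G(\mathbf{0})$ and $m_{\tilde G}(\mathbf{0})$ (counting indices $i$ with $G_i=G_{i+1}=0$, and the analogue for $\tilde G$) need not coincide, but their difference contributes only the constant vector $(m_G(\mathbf{0})-m_{\tilde G}(\mathbf{0}))\mathbf{1}$, which is harmless for a constancy conclusion — this is exactly why the argument must peel it off before inverting $T_{1,k-1}$.
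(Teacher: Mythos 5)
Your proposal is correct and follows essentially the same route as the paper's own proof: reduce via Proposition~\ref{r pair weight} to constancy of $\theta_G-\theta_{\tilde G}$ on hyperplanes, then run the $T_{r,k-1}$, $T_{1,k-1}^{-1}$ computation of Theorem~\ref{2 condition} using Lemma~\ref{T1}(a)--(c), with the zero-dimensional term peeled off as a harmless constant. The only (cosmetic) difference is that you carry the differences $m_G-m_{\tilde G}$ through a single chain of equivalences, whereas the paper keeps $\Delta_r$ and $\tilde\Delta_r$ separate and argues the two implications in turn.
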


\begin{proof}
Let $\phi$ be the $\mathbb{F}_q$-linear isomorphism from $\mathbb{F}_q^{k}$ to $ C$ such that $\phi(\mathbf{y})=\mathbf{y}G$ for any $\mathbf{y}\in \mathbb{F}_q^{k}$. And let $\tilde{\phi}$ be the $\mathbb{F}_q$-linear isomorphism from $\mathbb{F}_q^{k}$ to $ \tilde{C}$ such that $\tilde{\phi}(\mathbf{y})=\mathbf{y}\tilde{G}$ for any $\mathbf{y}\in \mathbb{F}_q^{k}$. Then $\tilde{\phi}= \varphi\phi$ by the definition of $\tilde{G}$. For any nonzero $\mathbf{c}\in C$, there is a $\mathbf{y}$ such $\mathbf{c}=\phi(\mathbf{y})$ and let $\tilde{\mathbf{c}}=\varphi(\mathbf{c})=\tilde{\phi}(\mathbf{y})$. Then  by Proposition~\ref{r pair weight}, we have
\begin{equation}\label{6.c}
  w_{p}(\mathbf{c})= n-\theta_{G}(\langle\mathbf{y}\rangle^{\bot})
\end{equation}
 and \begin{equation}\label{6.d}
  w_{p}(\tilde{\mathbf{c}})= n-\theta_{\tilde{G}}(\langle\mathbf{y}\rangle^{\bot}).
\end{equation}

For $0\leq r \leq s$, let $\Delta_{r}=(m_{G}(V^r_{1}),m_{G}(V^r_{2}),\cdots,m_{G}(V^r_{n_{r,k}}))$, and let $$\tilde{\Delta}_{r}=(m_{\tilde{G}}(V^r_{1}),m_{\tilde{G}}(V^r_{2}),\cdots,m_{\tilde{G}}(V^r_{n_{r,k}})).$$
Let $\Gamma_{k-1}=(\theta_{G}(V^{k-1}_{1}),\theta_{G}(V^{k-1}_{2}),\cdots,\theta_{G}(V^{k-1}_{n_{1,k-1}})),$  and let $$\tilde{\Gamma}_{k-1}=(\theta_{\tilde{G}}(V^{k-1}_{1}),\theta_{\tilde{G}}(V^{k-1}_{2}),\cdots,\theta_{\tilde{G}}(V^{k-1}_{n_{1,k-1}})).$$ Then we get
\begin{equation}\label{6.a}
  \Gamma_{k-1}=\sum_{r=0}^{s}\Delta_{r}T_{r,k-1}=m_G(\mathbf{0})\textbf{1}+\sum_{r=1}^{s}\Delta_{r}T_{r,k-1}
\end{equation}
and
\begin{equation}\label{6.b}
  \tilde{\Gamma}_{k-1}=\sum_{r=0}^{s}\tilde{\Delta}_{r}T_{r,k-1}=m_{\tilde{G}}(\mathbf{0})\textbf{1}+\sum_{r=1}^{s}\tilde{\Delta}_{r}T_{r,k-1}
\end{equation}
by the definition of $\theta_{G}$.

Suppose $a=w_b(\mathbf{c})-w_b(\tilde{\mathbf{c}})$ for any nonzero $\mathbf{c}\in C$. By Equation~\ref{6.c} and Equation~\ref{6.d}, we have $\theta_{G}(\langle\mathbf{y}\rangle^{\bot})-\theta_{\tilde{G}}(\langle\mathbf{y}\rangle^{\bot})=-a$ for any nonzero $\mathbf{y}\in \mathbb{F}_q^{k}$ and $$\Gamma_{k-1}-\tilde{\Gamma}_{k-1}=-a\mathbf{1}.$$

By Equation~\ref{6.a} and Equation~\ref{6.b}, we have $$\sum_{r=1}^{s}(\Delta_{r}-\tilde{\Delta}_{r})T_{r,k-1}=(m_{\tilde{G}}(\mathbf{0})-m_G(\mathbf{0})-a)\textbf{1}$$ and $$\sum_{r=1}^{s}(\Delta_{r}-\tilde{\Delta}_{r})T_{r,k-1}T_{1,k-1}^{-1}=\frac{m_{\tilde{G}}(\mathbf{0})-m_G(\mathbf{0})-a}{n_{1,k-1}}\mathbf{1}.$$
By Lemma~\ref{T1} (c), the element in the $i$th position of the vector $\sum_{r=1}^{s}(\Delta_{r}-\tilde{\Delta}_{r})T_{r,k-1}T_{1,k-1}^{-1}$ is
$$q\sum_{V\in \Omega_i  }\frac{1}{|V|}(m_G(V)-m_{\tilde{G}}(V))-\sum_{r=2}^s\sum_{V^r\in {\rm PG}^r(\mathbb{F}_q^k)}\frac{q^{r-1}-1}{q^{r-1}(q^{k-1}-1)}(m_G(V^r)-m_{\tilde{G}}(V^r)),$$
where $\Omega_i=\{V\in {\rm PG}^{\leq s}(\mathbb{F}_q^k)\,|\,V^1_i\subseteq V\}$. Then we have
\begin{eqnarray*}
&&q\sum_{V\in \Omega_i  }\frac{1}{|V|}(m_G(V)-m_{\tilde{G}}(V))-\sum_{r=2}^s\sum_{V^r\in {\rm PG}^r(\mathbb{F}_q^k)}\frac{q^{r-1}-1}{q^{r-1}(q^{k-1}-1)}(m_G(V^r)-m_{\tilde{G}}(V^r))\\
&=&\frac{m_{\tilde{G}}(\mathbf{0})-m_G(\mathbf{0})-a}{n_{1,k-1}}.
\end{eqnarray*}
Hence
\begin{eqnarray*}
q\sum_{V\in \Omega_i  }\frac{1}{|V|}(m_G(V)-m_{\tilde{G}}(V))&=&\sum_{r=2}^s\sum_{V^r\in {\rm PG}^r(\mathbb{F}_q^k)}\frac{q^{r-1}-1}{q^{r-1}(q^{k-1}-1)}(m_G(V^r)-m_{\tilde{G}}(V^r))\\
&+&\frac{m_{\tilde{G}}(\mathbf{0})-m_G(\mathbf{0})-a}{n_{1,k-1}}.
\end{eqnarray*}
This implies that  $\sum_{V\in \Omega_i  }\frac{1}{|V|}(m_G(V)-m_{\tilde{G}}(V))$ are constant vectors for any $1\leq i \leq n_{1,k}$.

Suppose $\sum_{V\in \Omega_i  }\frac{1}{|V|}(m_G(V)-m_{\tilde{G}}(V))=b$ for any $1\leq i \leq n_{1,k}$. Then $\sum_{r=1}^{s}(\Delta_{r}-\tilde{\Delta}_{r})T_{r,k-1}T_{1,k-1}^{-1}$ and $\sum_{r=1}^{s}(\Delta_{r}-\tilde{\Delta}_{r})T_{r,k-1}$ are constant vectors by Lemma~\ref{T1} (a),
since the element in the $i$th position of the vector $\sum_{r=1}^{s}(\Delta_{r}-\tilde{\Delta}_{r})T_{r,k-1}T_{1,k-1}^{-1}$ is
$$q\sum_{V\in \Omega_i  }\frac{1}{|V|}(m_G(V)-m_{\tilde{G}}(V))-\sum_{r=2}^s\sum_{V^r\in {\rm PG}^r(\mathbb{F}_q^k)}\frac{q^{r-1}-1}{q^{r-1}(q^{k-1}-1)}(m_G(V^r)-m_{\tilde{G}}(V^r))$$
$$=qb-\sum_{r=2}^s\sum_{V^r\in {\rm PG}^r(\mathbb{F}_q^k)}\frac{q^{r-1}-1}{q^{r-1}(q^{k-1}-1)}(m_G(V^r)-m_{\tilde{G}}(V^r)).$$
By Equation~\ref{6.a} and Equation~\ref{6.b}, we get $$\Gamma_{k-1}-\tilde{\Gamma}_{k-1}=\sum_{r=1}^{s}(\Delta_{r}-\tilde{\Delta}_{r})T_{r,k-1}+(m_{\tilde{G}}(\mathbf{0})-m_G(\mathbf{0}))\textbf{1}$$ is a constant vectors and $\theta_{G}(\langle\mathbf{y}\rangle^{\bot})-\theta_{\tilde{G}}(\langle\mathbf{y}\rangle^{\bot})$ is constant for any nonzero $\mathbf{y}\in \mathbb{F}_q^{k}$. Therefore, $w_b(\mathbf{c})-w_b(\varphi(\mathbf{c}))$ is constant for any nonzero $\mathbf{c}\in C$ by Equation~\ref{6.c} and Equation~\ref{6.d}.
\end{proof}

It is easy to get following result.
\begin{Corollary}\label{6.2}
Assume the notations are given above. Then $w_p(\mathbf{c})=w_p(\varphi(\mathbf{c}))$ for any $\mathbf{c}\in C$ if and only if $\sum_{V\in \Omega_i  }\frac{1}{|V|}(m_G(V)-m_{\tilde{G}}(V))$ is constant for any $1\leq i \leq n_{1,k}$ and there exists a nonzero $\mathbf{c_0}\in C$ such that $w_p(\mathbf{c_0})=w_p(\varphi(\mathbf{c_0}))$.

\end{Corollary}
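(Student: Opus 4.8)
The plan is to derive this directly from Theorem~\ref{6.1}, exploiting the elementary fact that a quantity which is both constant on the nonzero codewords and vanishes at one point must vanish identically. First I would record that since $\varphi$ is an $\mathbb{F}_q$-linear isomorphism we have $\varphi(\mathbf{0})=\mathbf{0}$, so $w_p(\mathbf{0})=0=w_p(\varphi(\mathbf{0}))$ holds automatically; it therefore suffices to control the nonzero codewords, which is precisely the regime covered by Theorem~\ref{6.1}.

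For the necessity direction, suppose $w_p(\mathbf{c})=w_p(\varphi(\mathbf{c}))$ for every $\mathbf{c}\in C$. Then $w_p(\mathbf{c})-w_p(\varphi(\mathbf{c}))=0$ is trivially constant on the nonzero codewords, so Theorem~\ref{6.1} yields that $\sum_{V\in\Omega_i}\frac{1}{|V|}(m_G(V)-m_{\tilde{G}}(V))$ is constant for all $1\le i\le n_{1,k}$. Moreover, since $C$ is a nonzero linear code by our standing assumption, any single nonzero codeword may be taken as the required $\mathbf{c}_0$, and it satisfies $w_p(\mathbf{c}_0)=w_p(\varphi(\mathbf{c}_0))$ by hypothesis.

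For the sufficiency direction, suppose $\sum_{V\in\Omega_i}\frac{1}{|V|}(m_G(V)-m_{\tilde{G}}(V))$ is constant in $i$ and that some nonzero $\mathbf{c}_0\in C$ satisfies $w_p(\mathbf{c}_0)=w_p(\varphi(\mathbf{c}_0))$. By Theorem~\ref{6.1} there is a constant $a$ with $w_p(\mathbf{c})-w_p(\varphi(\mathbf{c}))=a$ for every nonzero $\mathbf{c}\in C$. Evaluating this at $\mathbf{c}_0$ forces $a=w_p(\mathbf{c}_0)-w_p(\varphi(\mathbf{c}_0))=0$, whence $w_p(\mathbf{c})=w_p(\varphi(\mathbf{c}))$ for every nonzero $\mathbf{c}$, and together with the zero codeword handled above, for every $\mathbf{c}\in C$.

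Since the genuine content is already delivered by Theorem~\ref{6.1}, I expect no serious obstacle here; the only point requiring care is the role of the single base codeword $\mathbf{c}_0$, which is exactly what pins the additive constant $a$ to zero and thereby upgrades the \emph{constant difference} conclusion of Theorem~\ref{6.1} to the \emph{equal weights} conclusion sought in this corollary.
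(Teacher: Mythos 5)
Your proposal is correct and is exactly the argument the paper intends (the paper omits the proof, stating only that the corollary is an easy consequence of Theorem~\ref{6.1}): both directions follow from the equivalence in Theorem~\ref{6.1}, with the single codeword $\mathbf{c_0}$ pinning the constant difference to zero. Your explicit handling of the zero codeword via $\varphi(\mathbf{0})=\mathbf{0}$ and of the existence of $\mathbf{c_0}$ via the standing assumption that codes are nonzero is a careful touch the paper leaves implicit.
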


From Theorems~\ref{2 condition} and \ref{6.1}, we know that if we want to  determine a linear code $C$ is or not a pair equiweight code and  a linear isomorphism is or not preserving pair weights of codes, it is crucial to calculate the following value $$\sum_{V\in \Omega_i  }\frac{1}{|V|}m_G(V)$$ for an $[n,k]$-linear code $C$ with a generator matrix $G$, where $s=\min\{2,k-1\}$ and $\Omega_i=\{V\in {\rm PG}^{\leq s}(\mathbb{F}_q^k)\,|\,V^1_i\subseteq V\}$.

Recall we assume that $G=(G_{0},\cdots,G_{n-1})$ is a generator matrix of an $[n,k]$-linear code $C$ over $\mathbb{F}_q$. Let $$S_j=\mathbb{F}_qG_{j}+\mathbb{F}_qG_{j+1}$$ which is an $\mathbb{F}_q$-subspace of $\mathbb{F}_q^k$ and let $$\tilde{S}_j=[ G_{j},G_{j+1}]$$ is an $k\times 2$ submatrix of $G$ for $0\leq j\leq n-1$. We know that $\dim(S_j)=rank(\tilde{S}_j)$, where $rank(\tilde{S}_j)$ denotes the rank of $\tilde{S}_j$.

The following theorem gives an algorithm to calculate the value $\sum_{V\in \Omega_i  }\frac{1}{|V|}m_G(V)$ in this section.

\begin{Theorem}\label{611}
Let  $\kappa_{ij}=\left\{ \begin{array}{ll}
1,  & \textrm{if $V^1_{i}\subseteq S_j ;$}\\
0,  & \textrm{if $V^1_{i}\nsubseteq S_j .$}
\end{array} \right.$ for $1\leq i\leq n_{1,k}$ and $1\leq j \leq n$. Then $$\sum\limits_{V\in \Omega_i  }\frac{1}{|V|}m_G(V)=\sum\limits_{j=1}^{n}\kappa_{ij}q^{-rank(\tilde{S}_j)}.$$
\end{Theorem}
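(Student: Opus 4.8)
The plan is to unwind the definition of $m_G$ and then interchange the order of summation, so that the left-hand side gets reorganized as a sum indexed directly by the consecutive column pairs of $G$ rather than by subspaces $V$. First I would recall that, by the definition of $m_G$ and of $S_j=\mathbb{F}_qG_j+\mathbb{F}_qG_{j+1}=\langle G_j,G_{j+1}\rangle$, we have $m_G(V)=\big|\{1\le j\le n\mid S_j=V\}\big|$, where the indices are read cyclically modulo $n$. Writing this cardinality as a sum of $1$'s expresses the left-hand side as
$$\sum_{V\in\Omega_i}\frac{1}{|V|}\,m_G(V)=\sum_{V\in\Omega_i}\ \sum_{\substack{1\le j\le n\\ S_j=V}}\frac{1}{|V|}.$$

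Next I would interchange the two finite sums. For each fixed $j$, the inner condition $S_j=V$ with $V\in\Omega_i$ selects at most one subspace, namely $V=S_j$, and it contributes exactly when $S_j\in\Omega_i$. Thus the double sum collapses to
$$\sum_{V\in\Omega_i}\frac{1}{|V|}\,m_G(V)=\sum_{\substack{1\le j\le n\\ S_j\in\Omega_i}}\frac{1}{|S_j|}=\sum_{\substack{1\le j\le n\\ S_j\in\Omega_i}}q^{-\operatorname{rank}(\tilde{S}_j)},$$
using $|S_j|=q^{\dim S_j}=q^{\operatorname{rank}(\tilde{S}_j)}$, which follows from the earlier identity $\dim(S_j)=\operatorname{rank}(\tilde{S}_j)$.

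The final and only delicate step is to verify that the index set $\{\,j\mid S_j\in\Omega_i\,\}$ coincides with $\{\,j\mid\kappa_{ij}=1\,\}$. Since $S_j$ is spanned by the two columns $G_j,G_{j+1}$, we always have $\dim S_j=\operatorname{rank}(\tilde{S}_j)\le 2$, so for $k\ge 3$ (hence $s=2$) the dimension requirement built into $\Omega_i=\{V\in{\rm PG}^{\le s}(\mathbb{F}_q^k)\mid V^1_i\subseteq V\}$ is automatic, and membership $S_j\in\Omega_i$ reduces to the single containment $V^1_i\subseteq S_j$, i.e.\ to $\kappa_{ij}=1$. Substituting this yields $\sum_{j=1}^{n}\kappa_{ij}q^{-\operatorname{rank}(\tilde{S}_j)}$, as required. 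I expect the main obstacle to be nothing more than careful bookkeeping: keeping the cyclic indexing consistent among $m_G$, $S_j$ and $\tilde{S}_j$, and checking that the bound $\dim S_j\le 2$ is precisely what identifies $S_j\in\Omega_i$ with the defining condition $\kappa_{ij}=1$ on the right-hand side.
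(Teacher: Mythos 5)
Your proof is correct and is precisely the argument the paper intends: the paper's entire proof of this theorem is the single sentence that it follows ``by using the definition of the function $m_G$'', and your unwinding of $m_G(V)=\big|\{j\mid S_j=V\}\big|$ followed by an interchange of the two finite sums is exactly that verification, with the details made explicit.

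One point you flag deserves emphasis, because it is more than bookkeeping. Your identification of $\{j\mid S_j\in\Omega_i\}$ with $\{j\mid \kappa_{ij}=1\}$ genuinely requires $s=2$, i.e.\ $k\ge 3$. When $k=2$ (so $s=1$), the identity as stated in the theorem is actually false: $\Omega_i$ then contains no $2$-dimensional subspaces, yet the right-hand side still collects a contribution $q^{-2}$ from every position with $\mathrm{rank}(\tilde{S}_j)=2$. For instance, for the paper's own code $C_1$ over $\mathbb{F}_2$ with generator matrix $\left(\begin{smallmatrix}1&0&1&0\\ 0&1&0&1\end{smallmatrix}\right)$, every $S_j$ equals $\mathbb{F}_2^2$, so the left-hand side is $0$ while the right-hand side is $4\cdot 2^{-2}=1$. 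Thus your restriction to $k\ge 3$ is not a gap in your argument but a missing hypothesis in the paper's statement; to cover $k=2$ one must either assume $k\ge 3$ or amend $\kappa_{ij}$ so that it also vanishes when $\mathrm{rank}(\tilde{S}_j)>s$.
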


\begin{proof}
It is easy to prove this theorem by using the definition of the function $m_{G}$.
\end{proof}

\begin{Remark}
Let $C$ be an $[n,k]$-linear code over $\mathbb{F}_q$ with a generator matrix $G=(G_{0},\cdots,G_{n-1})$, then we can calculate $f_i=\sum_{j=1}^{n}\kappa_{ij}q^{-rank(\tilde{S}_j)}$ for $1\leq i\leq n_{1,k}$. First we can calculate $\{S_0,S_1,\cdots,S_{n-1}\}$, and $|{\rm PG}^1(S_i)|\leq q+1$. Assume $T=\bigcup_{i=1}^{n}{\rm PG}^1(S_i)$, we have $|T|\leq n(q+1)$. If $V^1_i\notin T$, then $f_i=0$ by Theorem~\ref{611}. So we only need to calculate $|T|$ subspaces of one dimension of $\mathbb{F}_q^k$ for $f_i$.

However, if we simply look at all $q^k$ codewords of $C$ and check their pair weights, then we need to calculate $\frac{q^k-1}{q-1}$ subspaces of dimension one of $\mathbb{F}_q^k$ for their pair weights since $\mathbf{c}$ and $\lambda\mathbf{c}$ have same pair weight for $\mathbf{c}\in C$ and $\lambda\in \mathbb{F}_q^*$. So using our characterization to decide if $C$ is a pair equiweight code or if a linear isomorphism preserve pair weight is more efficiently, since $|T|\leq n(q+1)<<\frac{q^k-1}{q-1}$ when $q$ is large. For example, when $C$ is a $[10,5]$-linear code $C$ over $\mathbb{F}_{31}$, $|T|=320$ is much less than $\frac{31^5-1}{31-1}\approx 28629151$.

\end{Remark}
\begin{Example}
Let $C,C_1,C_2$ be  linear codes of length $4$ with generator matrices  $$G=\left(\begin{array}{cccc}
                       1 & 0&0&0 \\
                        0 &1&0&1 \\
                        0&0&1&0

\end{array}\right) ,$$
$$G_1=\left(\begin{array}{cccc}
                       0 & 0&1&0 \\
                        0 &1&0&1 \\
                        1&0&0&0

\end{array}\right) ,$$
$$G_2=\left(\begin{array}{cccc}
                       1 & 0&0&0 \\
                        0 &0&1&1 \\
                        0&1&0&0

\end{array}\right) $$
over $\mathbb{F}_2$, respectively. And let $\varphi_1:\,C\rightarrow C_1 $ and $\varphi_2:\,C\rightarrow C_2 $ be linear isomorphisms such that $$\varphi_1((c_0,c_1,c_2,c_3))=(c_2,c_1,c_0,c_3)$$ and $$\varphi_2((c_0,c_1,c_2,c_3))=(c_0,c_2,c_1,c_3)$$ for any $(c_0,c_1,c_2,c_3)\in C$. Assume
$$V_1^1=\left(\begin{array}{c}
                       1  \\
                        0  \\
                        0

\end{array}\right) ,V_2^1=\left(\begin{array}{c}
                       0  \\
                        1  \\
                        0

\end{array}\right) ,V_3^1=\left(\begin{array}{c}
                       0  \\
                        0  \\
                        1

\end{array}\right) ,V_4^1=\left(\begin{array}{c}
                       1  \\
                        1 \\
                        0

\end{array}\right)  ,$$
$$V_5^1=\left(\begin{array}{c}
                       0  \\
                        1  \\
                        1

\end{array}\right) , V_6^1=\left(\begin{array}{c}
                       1  \\
                        0  \\
                        1

\end{array}\right) , V_7^1=\left(\begin{array}{c}
                       1  \\
                        1  \\
                        1

\end{array}\right) .$$

By Theorem~\ref{611}, we get following sequences such that
$$\{\sum_{V\in \Omega_i  }\frac{1}{|V|}m_G(V), 1\leq i \leq 7\}=\{\frac{1}{2},1,\frac{1}{2},\frac{1}{2},\frac{1}{2},0,0\},$$
$$\{\sum_{V\in \Omega_i  }\frac{1}{|V|}m_{G_{1}}(V), 1\leq i \leq 7\}=\{\frac{1}{2},1,\frac{1}{2},\frac{1}{2},\frac{1}{2},0,0\}$$ and
$$\{\sum_{V\in \Omega_i  }\frac{1}{|V|}m_{G_{2}}(V), 1\leq i \leq 7\}=\{\frac{1}{2},1,\frac{1}{2},\frac{1}{4},\frac{1}{4},\frac{1}{4},0\}.$$

Hence $\varphi_1$ preserves the pair weight, but $\varphi_2$ does not preserves the pair weight by Corollary~\ref{6.2}. On the other hand, we can get same result by calculate directly.

\end{Example}

\vskip 4mm

\noindent {\bf Acknowledgement.} This work was supported by NSFC (Grant No. 11871025).

%\newpage

\end{document}